  \newcommand{\nequation}{\setcounter{equation}{0}}
  \renewcommand{\theequation}{\mbox{\arabic{section}.\arabic{equation}}}
  \newcommand{\R}{{\mathbb R}}
  \newcommand{\C}{{\mathbb C}}
  \newcommand{\re}{\text{\upshape Re\,}}
  \newcommand{\im}{\text{\upshape Im\,}}
  \newtheorem{theorem}{Theorem}[section]
  \newtheorem{lemma}[theorem]{Lemma}
  \newtheorem{remark}[theorem]{Remark}
  \newtheorem{figuretext}{Figure}
  \title{Long-time asymptotics for the Nonlocal mKdV equation}
  \author{Fengjing He$^{1}$, Engui FAN$^{1}$ and Jian Xu$^{2}$}
  \address{\small $1$. School of Mathematical Sciences, Fudan University, Shanghai
  200433, P.R. China.\\
$2$. College of Science, University of Shanghai for Science and Technology, Shanghai 200093, P. R. China}
  \email{fjhe16@fudan.edu.cn; faneg@fudan.edu.cn; jianxu@usst.edu.cn}
\begin{document}
\baselineskip=18pt
  \begin{abstract}
  \baselineskip=18pt
  \noindent
  In this paper, we study the Cauchy problem with decaying initial data for the nonlocal modified
  Korteweg-de Vries equation (nonlocal mKdV)
  \[q_t(x,t)+q_{xxx}(x,t)-6q(x,t)q(-x,-t)q_x(x,t)=0,\]
 which can be viewed as
  a generalization of the local classical mKdV equation.  We first formulate the Riemann-Hilbert
  problem associated with  the Cauchy problem of the nonlocal mKdV equation.  Then we apply the Deift-Zhou nonlinear
  steepest-descent method to analyze  the long-time asymptotics for the
  solution of  the nonlocal mKdV equation. In contrast with the classical mKdV equation,
  we   find  some  new  and different results   on  long-time asymptotics for the nonlocal mKdV equation   and  some additional assumptions about the scattering data  are   made in our
  main results.
  \end{abstract}

  \maketitle

  \noindent
  {\small{\sc AMS Subject Classification (2010)}: 41A60, 35Q15, 35Q53.}

  \noindent
  {\small{\sc Keywords}: nonlocal mKdV equation,
  Riemann-Hilbert problem, nonlinear steepest-descent, long-time asymptotics.}

  \setcounter{tocdepth}{1}
  \tableofcontents

  \section{Introduction}\nequation
  The pioneering work for the nonlocal systems has been done by Ablowitz
  and Musslimani   when they studied the nonlocal NLS equation with
  PT symmetry \cite{AM1}. This research field has  attracted much  attention from both mathematics  and  the physical application of
  nonlinear optics and magnetics \cite{MECM, MMEC, GA}. Since the nonlocal NLS was found, a number of other nonlocal
   integrable systems has been introduced from a mathematical viewpoint. For instance,    from symmetry reduction
  of general AKNS system,  some new reverse space-time and reverse time nonlocal nonlinear integrable version
  of the NLS, mKdV, sine-Gordon equation were found \cite{AM3}.    Recently, Yang  constructed some new nonlocal
  integrable equations by simple variable transformations on local equations \cite{YY}.

   Like the local case, the
  nonlocal  integrable systems also possess integrable properties,   for example,  nonlocal NLS equation admits infinite number of conservation
  laws and can be solved by using the inverse scattering transform (IST) \cite{AM2}.  Some  exact solutions of   nonlocal mKdV equation  including soliton, kink,
  rogue-wave  and breather  were obtained through either Darboux transformation
  or IST.  These solutions have displayed  some new properties which are different from those
  of local equation\cite{JZ1, JZ2}.   In physical application, the nonlocal mKdV possesses the shifted
  parity and delayed time reveal symmetry, and thus it can be related to the
  Alice-Bob   system\cite{LH}.   For instance, a special approximate solution of the
  nonlocal mKdV was applied to theoretically capture the salient features of two
  correlated dipole blocking events in atmospheric dynamical systems \cite{TLH}. 
  
    However,  there 
  has been still  not  much work on the Riemann-Hilbert method to the nonlocal systems except 
  to the recent paper \cite{RS}, where  Rybalko and Shepelsky  obtained  the long-time asymptotics of the solution
 for the nonlocal Schrodinger equation via the nonlinear steepest-descent method.
In this paper,   we  apply Riemann-Hilbert (RH)  method and Deift-Zhou nonlinear steepest-descent method 
  to analyze  longtime asymptotics  of  the Cauchy problem of  the nonlocal mKdV equation
  \begin{subequations}\label{nmkdv_ini}
    \begin{align}
      \label{nmkdv}
      &q_t(x,t)+q_{xxx}(x,t)-6q(x,t)r(x,t)q_x(x,t)=0,\\
      \label{ini}
      &q(x,0)=q_0(x),
    \end{align}
  \end{subequations}
  where $r(x,t)=q(-x,-t)$ is a symmetry reduction of an AKNS system,
  and the initial data $q_0(x)$ decays rapidly to zero  as $x\to\pm\infty$.

 In 1970's, the solutions of the Cauchy problem for many integrable nonlinear
  wave equations was obtained by solving an associated RH problem
  on the complex plane \cite{AKNS}. More precisely, starting with initial data,
  the direct scattering transform gives rise to certain spectral functions whose time
  evolution is simple. Then the solution of the original Cauchy problem can be
  recovered via the IST characterized in terms of RH problem whose jump
  matrix depends on the given spectral functions.

 In 1993,  Deift and Zhou introduced the nonlinear steepest-descent method to analyze the
  asymptotics of the solutions of RH problems \cite{DZ}. It involves a series
  of counter deformation aiming to reduce the original RH problem to the one
  whose jump matrix is decaying fast (as $t\to\infty$) to the identity matrix
  everywhere except near some stationary phase points; and it is the contour
  near these points that determine the leading order of the long time
  asymptotics which can be obtained explicitly after rescaling the RH problem.
  This method has been used to study rigorously the long$-$time asymptotics of a wide variety of integral systems, such as the mKdV equation \cite{DZ}  and the non-focusing NLS equation \cite{diz},the sine-Gordon equation \cite{cvz}, the modified Schr$\ddot{o}$dinger equation \cite{kv1,kv2}, the KdV equation \cite{gt}, the Cammasa$-$Holm equation \cite{bkst}, Fokas-Lenells equation\cite{XF}, derivative NLS equation \cite{XFC},
  short pulse equation\cite{X,XF2}, Sine-Gordon equation \cite{HL},  Kundu-Eckhaus Equation\cite{ZXF}.

In \cite{DZ}, Deift and Zhou obtained the explicit leading order long-time
asymptotic behavior of the solution  to the classical mKdV equation
  \begin{subequations}\label{mkdv_ini}
    \begin{align}
      \label{mkdv}
      &q_t(x,t)+q_{xxx}(x,t)-6q^2(x,t)q(x,t)=0,\\
      \label{ini}
      &q(x,0)=q_0(x),
    \end{align}
  \end{subequations}
 using the nonlinear steepest descent method.  
Here we extend above results to give the asymptotic behavior of solution   of  nonlocal mKdV equation (\ref{nmkdv_ini}), 
 but it will be much different  from that on the classical mKdV equation (\ref{mkdv_ini}) in the following three aspects.
  
 (i)   For our  nonlocal mKdV equation,  the  jump matrix of the  RH problem involve  two  reflection coefficients $r_1(k)$ and $r_2(k)$,   but 
  there is only  one  reflection coefficient $r(k)$ for  the local mKdV equation,  which is  specified by
$ r_1(k)=r(k), \quad r_2(k)=\overline{r(\overline{k})}, \ |r(k)|<1$.

(ii)   In the analysis of the local equations,   the  great   difference
    from the nonlocal case is that $1-r_1(k)r_1(k)$ is complex-valued,
    which leads  to  $\im \nu(\zeta) \neq 0$. We will find below that $\im \nu(\zeta)$
    contributes to both the leading order and the error terms in
    the asymptotics for the nonlocal mKdV equation.   To obtain asymptotic behavior of solution   of  nonlocal mKdV equation, 
    we have used Slightly different method from  that in \cite{DZ}.

(iii)  At last,  in  contrast with the asymptotic of  local mKdV equation, 
we obtain the long time asymptotic    of  nonlocal mKdV equation 
as follows
    \begin{align}\nonumber
      q(x,t)=\frac{4\epsilon\re\beta(\zeta,t)}{\tau^{\frac{1}{2}-\im\nu(\zeta)}}+
      O(\epsilon\tau^{-\frac{1+\alpha}{2}+|\im\nu(\zeta)|+\im\nu(\zeta)}).
    \end{align}
 Note that   the decay rate of the leading term depends
    on $\zeta=\frac{x}{t}$ through $\im\nu(\zeta)$,  while
    $\im\nu(\zeta)=0$ for all $\zeta\in\mathcal{I}$ in the local mKdV equation.

  Organization of this paper is as follows.
  In Section \ref{ist_rhp}, we present the IST and express the solution of nonlocal mKdV equation (\ref{nmkdv_ini})
  in terms of a RH problem. In Section \ref{rtamrp}, we conduct several deformations to
  obtain a model RH problem convenient for consequent analysis. In Section \ref{lta}, we derive the
  long-time behavior of nonlocal mKdV equation (\ref{nmkdv_ini}) in the similarity sector.

  \section{Inverse scattering transform and the Riemann-Hilbert problem}\nequation\label{ist_rhp}
  Since (\ref{nmkdv}) is a member of AKNS systems, the standard method of IST was applied
  in \cite{JZ2}. We reformulate the IST to express the solution of (\ref{nmkdv_ini})
  in terms of a RH problem    for   convenience   of  the consequent analysis.

  The nonlocal mKdV equation (\ref{nmkdv}) admits the Lax pair
  \begin{subequations}\label{lax}
    \begin{align}
      \label{lax_a}
      &\Phi_x+ik\sigma_3\Phi=U\Phi,\\
      \label{lax_b}
      &\Phi_t+4ik^3\sigma_3\Phi=V\Phi,
    \end{align}
  \end{subequations}
  where $\Phi(x,t,k)$ is a $2\times2$-matrix valued eigenfunction, $k\in\C$ is the
  spectral parameter, and
  \[\sigma_3=\begin{pmatrix}
    1& 0\\
    0& -1\\
  \end{pmatrix}, \qquad
  U=\begin{pmatrix}
    0& q(x,t)\\
    q(-x,-t)& 0\\
  \end{pmatrix}, \qquad
  V=\begin{pmatrix}
    A& B\\
    C& -A\\
  \end{pmatrix}\]
  with
  \begin{align*}
    &A=-2iq(x,t)q(-x,-t)k+q(-x,-t)q_x(x,t)+q(x,t)q_x(-x,-t),\\
    &B=4k^2q(x,t)+2iq_x(x,t)k+2q^2(x,t)q(-x,-t)-q_{xx}(x,t),\\
    &C=4k^2q(-x,-t)+2iq_x(-x,-t)k+2q^2(x,t)q(-x,-t)-q_{xx}(-x,-t).
  \end{align*}

  Let $\Psi_j(x,t,k)$, $j=1,2$, be the $2\times2$-matrix valued solutions of the linear
  Volterra integral equations
  \begin{subequations}\label{volterra}
    \begin{align}
      \label{volterra_1}
      &\Psi_1(x,t,k)=I+\int_{-\infty}^{x}e^{ik(y-x)\hat{\sigma}_3}(U(y,t)\Psi_1(y,t,k))\,dy,\quad
      k\in(\C_+,\C_-), \\
      \label{volterra_2}
      &\Psi_2(x,t,k)=I+\int_{\infty}^{x}e^{ik(y-x)\hat{\sigma}_3}(U(y,t)\Psi_2(y,t,k))\,dy,\quad
      k\in(\C_-,\C_+),
    \end{align}
  \end{subequations}
  where $\hat{\sigma}_3$ acts on a $2\times2$ matrix $A$ by
  $\hat{\sigma}_3A=[\hat{\sigma}_3,A]$, i.e.
  $e^{\hat{\sigma}_3}A=e^{\sigma_3}Ae^{-\sigma_3}$,
  $\C_{\pm}=\left\{k\in\C\,|\pm\im k>0\right\}$, and the notation
  $k\in(\C_+,\C_-)$ indicates that the first and second columns are valid for
  $k\in\C_+$ and $k\in\C_-$, respectively. From (\ref{volterra}), we can prove
  that $\Psi_1(x,t,\cdot)$ is continuous for
  $k\in(\overline{\C_+},\overline{\C_-})$ and analytic for $k\in(\C_+,\C_-)$,
  $\Psi_2(x,t,\cdot)$ is continuous for $k\in(\overline{\C_-},\overline{\C_+})$
  and analytic for $k\in(\C_-,\C_+)$ \cite{BC}. Moreover we can derive the large
  $k$ asymptotics of $\Psi_j$ (c.f. \cite{BC})
  \begin{equation}\label{kasym}
    \Psi_j(x,t,k)=I+O(k^{-1}), \qquad k\to\infty,
  \end{equation}
  where the error term is uniformly with respect to $x$, $t$.

  Then the Jost solutions
  $\Phi_j(x,t,k)$, $j=1,2$, of (\ref{lax}) are defined as follow
  \begin{equation}\label{jost}
    \Phi_j(x,t,k)=\Psi_j(x,t,k)e^{(-ikx-4ik^3t)\sigma_3}.
  \end{equation}
  Since $U$ is traceless, $\det\Phi_j(x,t,k)\equiv1$ for all $x$, $t$, and $k$.
  And for $k\in\R$, $\Phi_j(x,t,k)$ can be related by scattering matrix $S(k)$
  \begin{equation}\label{jost1_2}
    \Phi_1(x,t,k)=\Phi_2(x,t,k)S(k),\qquad k\in\R,
  \end{equation}
  where
  \begin{equation}
    \label{scat}
    S(k)=\begin{pmatrix}
    s_{11}(k)& s_{12}(k)\\
    s_{21}(k)& s_{22}(k)\\
    \end{pmatrix}
    ,\qquad k\in\R,
  \end{equation}
  is independent of $x$ and $t$.

  We now establish important symmetry properties of the scattering matrix (\ref{scat}).
  It can be verified that if $\Psi(x,t,k)$ is the solution of (\ref{volterra_1}), then
  $\Lambda\overline{\Psi(-x,-t,-\bar{k})}\Lambda^{-1}$ is the solution of
  (\ref{volterra_2}) with
  $\Lambda=\bigl(
    \begin{smallmatrix}
      0 &-1\\1 &0
    \end{smallmatrix})$.
  Notice (\ref{jost}) and the uniqueness of the solution of the Volterra equation
  (\ref{volterra}), we arrives that
  \begin{equation}\label{symm}
    \Phi_2^{(2)}(x,t,k)=\Lambda\overline{\Phi_1^{(1)}(-x,-t,-\overline{k})}, \qquad
    \Phi_2^{(1)}(x,t,k)=\Lambda^{-1}\overline{\Phi_1^{(2)}(-x,-t,-\overline{k})},
  \end{equation}
  where $\Phi_i^{(j)}(x,t,k)$ denotes the $j$-th column of the matrix $\Phi_i(x,t,k)$.
  Rewrite the relation between the Jost solutions (\ref{jost1_2}) as
  \begin{subequations}\label{jost_re}
    \begin{align}
      &\Phi_1^{(1)}(x,t,k)=s_{11}(k)\Phi_2^{(1)}(x,t,k)+s_{21}(k)\Phi_2^{(2)}(x,t,k),\\
      &\Phi_1^{(2)}(x,t,k)=s_{12}(k)\Phi_2^{(1)}(x,t,k)+s_{22}(k)\Phi_2^{(2)}(x,t,k),
    \end{align}
  \end{subequations}
  the scattering data can be represented in terms of $\Phi_i^{(j)}$,
  and from (\ref{symm}), we reach the following symmetry
  \begin{subequations}\label{scat_symm}
    \begin{align}\nonumber
      s_{11}(k)&=\det(\Phi_1^{(1)}(x,t,k),\Phi_2^{(2)}(x,t,k))\\
      &=\det(\Lambda^{-1}(\overline{\Phi_1^{(1)}(-x,-t,-\overline{k})},\overline{\Phi_2^{(2)}(-x,-t,-\overline{k})})\Lambda)
      =\overline{s_{11}(-\overline{k})},\\
      \nonumber
      s_{22}(k)&=\det(\Phi_2^{(1)}(x,t,k),\Phi_1^{(2)}(x,t,k))\\
      &=\det(\Lambda^{-1}(\overline{\Phi_2^{(1)}(-x,-t,-\overline{k})},\overline{\Phi_1^{(2)}(-x,-t,-\overline{k})})\Lambda)
      =\overline{s_{22}(-\overline{k})},\\
      \nonumber
      s_{12}(k)&=\det(\Phi_1^{(2)}(x,t,k),\Phi_2^{(2)}(x,t,k))\\
      &=\det(\Lambda(\overline{\Phi_2^{(1)}(-x,-t,-\overline{k})},\overline{\Phi_1^{(1)}(-x,-t,-\overline{k})}))
      =\overline{s_{21}(-\overline{k})}.
    \end{align}
  \end{subequations}
  Further more, we can also verify that if $\Psi(x,t,k)$ is the solution of
  (\ref{volterra_1}), then $\Lambda \Psi(-x,-t,k) \Lambda^{-1}$
  is the solution of (\ref{volterra_2}). So following the above procedure, we obtain
  another symmetry property
  \begin{equation}\label{ano_scat_symm}
    s_{12}(k)=s_{21}(k).
  \end{equation}
  Finally, from (\ref{scat_symm}) and (\ref{ano_scat_symm}), $S(k)$ can be written
  in the form
  \begin{equation}\label{S}
    S(k)=\begin{pmatrix}
      a_1(k)& b(k)\\
      b(k)& a_2(k)
    \end{pmatrix},
  \end{equation}
  where
  \begin{equation}\label{absymm}
    a_1(k)=\overline{a_1(-\overline{k})}, \qquad a_2(k)=\overline{a_2(-\overline{k})},
    \qquad b(k)=\overline{b(-\overline{k})}.
  \end{equation}

  In accordance with the case of local equations \cite{BC, IN}, the scattering matrix
  $S(k)$ is uniquely determined by the initial data $q_0(x)$, and we can conclude that:
  \begin{enumerate}
    \item
    $a_1(k)$ is analytic for $k\in\C_+$, and continuous for $k\in\overline{\C_+}$;
    $a_2(k)$ is analytic for $k\in\C_-$, and continuous for $k\in\overline{\C_-}$.
    \item
    $a_j(k)=1+O(k^{-1})$, $j=1,2$ and
    $b(k)=O(k^{-1})$ as $k\to\infty$.
    \item
    $a_1(k)=\overline{a_1(-\overline{k})},\, k\in\overline{\C_+};\qquad
    a_2(k)=\overline{a_2(-\overline{k})},\, k\in\overline{\C_-};\qquad
    b(k)=\overline{b(-\overline{k})},\, k\in\R$.
    \item\label{determinant}
    $a_1(k)a_2(k)-b(k)^2=1$, $k\in{\R}$, (follows from $\det S(k)=1$).
  \end{enumerate}

  Now we define the matrix valued function $M$ as
  \begin{equation}\label{M}
    M(x,t,k)=\begin{cases}
    \left(\frac{\Psi_1^{(1)}(x,t,k)}{a_{1}(k)},\Psi_2^{(2)}(x,t,k)\right), \im k>0, \\
    \left(\Psi_2^{(1)}(x,t,k),\frac{\Psi_1^{(2)}(x,t,k)}{a_{2}(k)}\right), \im k<0.
    \end{cases}
  \end{equation}
  Using scattering relation (\ref{jost1_2}), we have the jump condition for
  $M(x,t,k)$ across $k\in\R$
  \begin{equation}
    \label{MMJ}
    M_{+}(x,t,k)=M_{-}(x,t,k)J(x,t,k),\qquad k \in \R,
  \end{equation}
  where $M_{\pm}$ is the limiting value of $M$ as $k$ approaches $\R$ from $\C_\pm$,
  and
  \begin{equation}
    \label{J}
    J(x,t,k)=e^{(-ikx-4ik^3t)\hat{\sigma}_3}
    \begin{pmatrix}
      1-r_1(k)r_2(k)& -r_2(k)\\
      r_1(k)& 1
    \end{pmatrix},\qquad k \in \R,
  \end{equation}
  and reflection coefficients are defined by
  \begin{equation}\label{r1r2}
    r_1(k)=\frac{b(k)}{a_1(k)},\qquad
    r_2(k)=\frac{b(k)}{a_2(k)}.
  \end{equation}
  From the symmetry of scattering data (\ref{absymm}), $r_1$ and $r_2$ also possess
  the symmetry property
  \begin{equation}\label{r1r2symm}
    r_1(k)=\overline{r_1(-\overline{k})}, \qquad
    r_2(k)=\overline{r_2(-\overline{k})}, \qquad k \in \R,
  \end{equation}
  and the determinant property 4 implies that
  \begin{equation}
    1-r_1(k)r_2(k)=\frac{1}{a_1(k)a_2(k)}, \qquad k \in \R.
  \end{equation}

  We assume that $a_1(k)$ and $a_2(k)$ have no zeros in $\overline{\C_+}$ and
  $\overline{\C_-}$ respectively so that one can assemble the above facts into
  the form of a Riemann-Hilbert problem
  \begin{equation}\label{Mrhp}
    \begin{cases}
      M(x,t,k) \text{ analytic in } \C\setminus\R,\\
      M_{+}(x,t,k)=M_{-}(x,t,k)J(x,t,k),\qquad k \in \R,\\
      M(x,t,k)\to I, \qquad k\to\infty.
    \end{cases}
  \end{equation}

  \begin{remark}\label{r1r2diff}
    RH problem (\ref{Mrhp}) can be regard as a generalization of
    the RH problem associated with the mKdV equation. In the local
    case, the reflection coefficients are specified by
    \begin{equation}\label{localr}
      r_1(k)=r(k), \quad r_2(k)=\overline{r(\overline{k})}, \qquad k \in \R,
    \end{equation}
    with $|r(k)|<1$.
  \end{remark}

  Inversely, if RH problem (\ref{Mrhp}) has a unique solution for all $(x,t)$,
  the solution $q(x,t)$ of (\ref{nmkdv_ini}) is given by (c.f. \cite{AC, BDT, FT})
  \begin{equation}\label{ist4.35}
    q(x,t)=2i\lim_{k\to\infty}(kM(x,t,k))_{12}.
  \end{equation}

  \section{Reduction to a model RH problem}\nequation\label{rtamrp}
  The deformations of the RH problem (\ref{Mrhp}) are similar to the local
  case \cite{DZ,L}, where the original RH problem is deformed to the one
  whose jump matrix decays to the identity matrix everywhere
  as $t \to \infty$ except near the stationary points. Then an explicitly
  solvable RH problem is introduced to obtain a model RH problem
  for which long time asymptotics can be conveniently performed.

  Let $\mathcal{I}=[-N,0)$ be the interval with $N>0$
  and let $\zeta=x/t$ be the variable with $\zeta \in \mathcal{I}$.
  Let $M(x,t,\cdot)$ denote the unique solution of the RH problem (\ref{Mrhp}),
  and the phase of the exponentials $e^{\pm t\Phi(\zeta)}$ in the jump matrix
  (\ref{J}) is defined by
  \begin{equation}\label{phasePhi}
    \Phi(\zeta,k)=2ik\zeta+8ik^3,
  \end{equation}
which admits  two stationary points
  \begin{equation}\label{+-k_0}
    k_0=\sqrt{-\frac{\zeta}{12}}, \qquad k_0=-\sqrt{-\frac{\zeta}{12}},
  \end{equation}
  such that $\frac{d\Phi}{dk}(\pm k_0)=0$.

  Now we deform the RH problem (\ref{Mrhp})
  with the following steps.

  {\bf Step 1}
  The first step is to search for  upper/lower and lower/upper triangular factorizations  of the jump matrix.
  For this purpose, we  introduce  a scalar RH problem
  \begin{equation}\label{deltarhp}
    \begin{cases}
      \delta \text{ analytic in } \C \setminus [-k_0,k_0],\\
      \delta_+=\delta_-(1-r_1(k)r_2(k)), \qquad |k|<k_0,\\
      \delta \to 1 \qquad k \to \infty.
    \end{cases}
  \end{equation}
  Direct calculation shows that (\ref{deltarhp}) admits a unique solution
  \begin{equation}\label{delta}
    \delta(\zeta,k)=e^{\frac{1}{2\pi i} \int_{-k_0}^{k_0} \ln(1- r_1(s)r_2(s)) \frac{ds}{s - k}}, \qquad  k \in \C \setminus [-k_0,k_0].
  \end{equation}
  The symmetry (\ref{r1r2symm}) implies that
  \begin{equation}\label{deltasymm}
    \delta(\zeta,k)=\overline{\delta(\zeta,-\overline{k})},
  \end{equation}
  moreover, integrating by parts in formula (\ref{delta}) yields
  \begin{equation}\label{deltanutildechi}
    \delta(\zeta,k)=\frac{(k-k_0)^{i\nu(\zeta)}}{(k+k_0)^{i\overline{\nu(\zeta)}}}e^{\tilde{\chi}(\zeta,k)},
  \end{equation}
  where $\tilde{\chi}(\zeta,k)$ is a uniformly bounded function
  with respect to $\zeta \in \mathcal{I}$ and $k \in \C \setminus \R$,
  which is defined by
  \begin{equation}\label{tildechi}
    \tilde{\chi}(\zeta,k)=-\frac{1}{2\pi i} \int_{-k_0}^{k_0} \ln(k-s) d\ln(1-r_1(s)r_2(s)),
  \end{equation}
  and $\nu(\zeta)$ is a bounded function defined by
  \begin{equation}\label{nu}
    \nu(\zeta)=-\frac{1}{2\pi}\ln(1-r_1(k_0)r_2(k_0))=-\frac{1}{2\pi}\ln|1-r_1(k_0)r_2(k_0)|-\frac{i}{2\pi}\Delta(\zeta)
  \end{equation}
  with
  \begin{equation*}
    \Delta(\zeta)=\int_{-\infty}^{k_0} d\arg(1-r_1(s)r_2(s)).
  \end{equation*}

  We assume that
  \begin{equation}\label{Deltares}
    \Delta(\zeta) \in (-\pi, \pi), \qquad \zeta\in\mathcal{I},
  \end{equation}
  then $\nu(\zeta)$ is single valued and
  \begin{equation}\label{reinu}
    \bigl|\im \nu(\zeta)\bigr| < \frac{1}{2}, \qquad \zeta \in \mathcal{I}.
  \end{equation}
  Consequently the singularity of $\delta(\zeta,k)$ at $k=\pm k_0$ is square
  integrable.

  $\delta(\zeta,k)$ can be written in another way:
  \begin{equation}\label{deltachi}
    \delta(\zeta,k)=\bigl(\frac{k-k_0}{k+k_0}\bigr)^{i\nu(\zeta)}e^{\chi(\zeta,k)},
  \end{equation}
  where
  \begin{align}\label{chi}\nonumber
    \chi(\zeta,k)&=\frac{1}{2\pi i}\int_{-k_0}^{k_0}
    \ln\bigl(\frac{1-r_1(s)r_2(s)}{1-r_1(k_0)r_2(k_0)}\bigr)
    \frac{ds}{s-k}\\
    &=\tilde{\chi}(\zeta,k)-\frac{1}{2\pi i}
    \ln \bigl(\frac{1-\overline{r_1(k_0)}\overline{r_2(k_0)}}{1-r_1(k_0)r_2(k_0)}\bigr)
    \ln(k+k_0).
  \end{align}
  In the local case, $\chi(\zeta,k)$ is equivalent to
  $\tilde{\chi}(\zeta,k)$ by symmetry (\ref{localr}), so $\chi(\zeta,k)$
  is uniformly bounded. However $\chi(\zeta,k)$ is singular at
  $k=-k_0$ for nonlocal equation.

  \begin{lemma}\label{chiuniformlybounded}
    Let $S=\bigl\{k' \in \C \big| |k'+k_0| \geq \frac{k_0}{2}\bigr\}$
    denote the complex plane minusing a neighborhood of $-k_0$.
    Then $\chi(\zeta, k)$ is uniformly bounded with respect to
    $\zeta \in \mathcal{I}$ and
    $k \in S$, i.e.
    \begin{equation}\label{supsupchileq}
      \sup_{\zeta \in \mathcal{I}}
      \sup_{k \in S}
      |\chi(\zeta,k)| \leq C
    \end{equation}
  \end{lemma}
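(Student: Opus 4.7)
My plan is to work directly from the first line of (\ref{chi}), the Cauchy-type integral
\[\chi(\zeta,k)=\frac{1}{2\pi i}\int_{-k_0}^{k_0}\frac{F(s)}{s-k}\,ds, \qquad F(s):=\ln\!\left(\frac{1-r_1(s)r_2(s)}{1-r_1(k_0)r_2(k_0)}\right),\]
rather than from the second, $\tilde\chi$-form expression. In the nonlocal case both $\tilde\chi(\zeta,k)$ and $\ln(k+k_0)$ in fact grow logarithmically as $|k|\to\infty$, and only their difference is tame; the direct representation avoids tracking this cancellation by hand. Two features of $F$ drive the analysis: $F(k_0)=0$, which tames the integrand at $s=k_0$ even when $k\to k_0$; and $F(-k_0)=\ln\!\left(\frac{1-\overline{r_1(k_0)r_2(k_0)}}{1-r_1(k_0)r_2(k_0)}\right)$, which is purely imaginary with $|F(-k_0)|=2|\Delta(\zeta)|<2\pi$ by assumption (\ref{Deltares}). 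The non-vanishing of $F$ at $s=-k_0$ is precisely what produces the logarithmic singularity of $\chi$ at $k=-k_0$, and is the reason $S$ excludes a neighborhood of $-k_0$.

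The central step is to subtract from $F$ the linear function $p(s)=\frac{k_0-s}{2k_0}F(-k_0)$ that interpolates $F$ at the two endpoints, and to split $\chi=\chi_1+\chi_2$ with
\begin{align*}
\chi_1(\zeta,k)&=\frac{1}{2\pi i}\int_{-k_0}^{k_0}\frac{F(s)-p(s)}{s-k}\,ds,\\
\chi_2(\zeta,k)&=\frac{F(-k_0)}{4\pi i\,k_0}\left[(k_0-k)\ln\frac{k_0-k}{-k_0-k}-2k_0\right],
\end{align*}
the explicit form of $\chi_2$ being obtained by elementary integration of $p(s)/(s-k)$. For $\chi_1$ the density $F-p$ is Hölder continuous on $[-k_0,k_0]$ (in fact $C^1$, since $r_1,r_2$ are smooth and $1-r_1r_2$ is bounded away from $0$ under the no-zeros hypothesis on $a_1,a_2$) and vanishes at both endpoints, so standard Cauchy-integral estimates give $\sup_{k\in\mathbb{C}}|\chi_1(\zeta,k)|\le C$. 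For $\chi_2$ on $S$, the only possible trouble points are $k=-k_0$ (excluded) and $|k|\to\infty$: the prefactor $(k_0-k)$ absorbs the logarithmic singularity at $k=k_0$, while a Taylor expansion gives $(k_0-k)\ln\frac{k_0-k}{-k_0-k}\to 2k_0$ as $|k|\to\infty$, so the bracket vanishes in the limit and $\chi_2\to 0$.

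The main obstacle I anticipate is uniformity in $\zeta\in\mathcal{I}$ as $k_0\to 0^+$: the prefactor of $\chi_2$ contains $F(-k_0)/k_0$, and the Lipschitz constant of $p$ equal to $|F(-k_0)|/(2k_0)$ enters the Hölder estimate for $\chi_1$. I will control this by invoking the symmetry $r_i(-k)=\overline{r_i(k)}$ on $\mathbb{R}$ from (\ref{r1r2symm}), which forces $r_1(0)r_2(0)\in\mathbb{R}$ and hence $F(-k_0)=O(k_0)$ as $k_0\to 0$; combined with smoothness of $r_1,r_2$ and the standing no-zero assumption, this keeps all constants in the Cauchy-integral estimate of $\chi_1$ and in the explicit bound for $\chi_2$ uniform in $\zeta\in\mathcal{I}$, giving (\ref{supsupchileq}).
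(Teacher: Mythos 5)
Your proof is correct, but it takes a genuinely different route from the paper's. The paper argues in two regions: for $|k+k_0|\geq G$ it deduces boundedness of $\chi$ from $\delta\to 1$ via (\ref{deltachi}), and on the annulus $\frac{k_0}{2}\leq|k+k_0|<G$ it uses the second line of (\ref{chi}), bounding $|\chi|$ by the (locally) bounded $\tilde{\chi}$ plus $C\bigl|\ln\bigl(\frac{1-\overline{r_1(k_0)}\,\overline{r_2(k_0)}}{1-r_1(k_0)r_2(k_0)}\bigr)\bigr|\,\bigl|\ln\frac{k_0}{2}\bigr|$, and then kills the potential blow-up of $\bigl|\ln\frac{k_0}{2}\bigr|$ as $k_0\to0$ by the same observation you make, namely $r_j(0)=\overline{r_j(0)}$, so the logarithmic prefactor is $O(k_0)$. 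You instead work directly with the Cauchy integral in the first line of (\ref{chi}), subtract the linear interpolant of $F$ at $\pm k_0$, evaluate the polynomial part in closed form, and control the remainder by the Plemelj--Privalov endpoint estimate for H\"older densities vanishing at the endpoints. The two proofs rest on exactly the same two inputs --- the excluded disk $|k+k_0|<\frac{k_0}{2}$ and the reality of $r_j(0)$ (equivalently $F(-k_0)=-2i\Delta(\zeta)=O(k_0)$) --- but yours is more self-contained and makes the location and strength of the $\ln(k+k_0)$ singularity, the decay at infinity, and the precise role of assumption (\ref{Deltares}) explicit in a single formula for $\chi_2$. Your parenthetical observation that in the nonlocal case $\tilde{\chi}$ itself grows like $-\frac{\Delta(\zeta)}{\pi}\ln k$ at infinity is accurate and is a fair caveat about the blanket assertion after (\ref{tildechi}) that $\tilde{\chi}$ is uniformly bounded on all of $\C\setminus\R$; the paper's proof is unaffected only because it invokes that boundedness solely on a bounded annulus and treats large $k$ separately, whereas your decomposition handles both regimes at once. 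The only point you should spell out a little more is that the constant in the Cauchy-integral estimate for $\chi_1$ scales like $[F-p]_{\mathrm{Lip}}\cdot k_0$ (up to a harmless logarithm of the interval length), which is where the uniform bound $[p]_{\mathrm{Lip}}=|F(-k_0)|/(2k_0)\leq C$ you establish in the last paragraph is actually consumed.
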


  \begin{proof}
    Since $\delta(\zeta, k) \to 1$ as $k \to \infty$, $\chi(\zeta,k)$ is uniformly
    bounded with respect to $\zeta \in \mathcal{I}$ and
    $k \in \{k' \in \C | |k'+k_0| \geq G\}$ by (\ref{deltachi}),
    where $G$ is a large enough constant.
    Let $k=-k_0+ue^{i\alpha}$, where $\frac{k_0}{2} \leq u < G$ and
    $\alpha \in (-\pi, \pi]$. By (\ref{chi})
    \begin{align}\label{chileq}\nonumber
      |\chi(\zeta,k)| & \leq C + C\Biggl|\ln\bigg(\frac{1-\overline{r_1(k_0)}\overline{r_2(k_0)}}{1-r_1(k_0)r_2(k_0)}\bigg)\Biggr| \big|\ln(ue^{i\alpha})\big|\\
      & \leq C + C\Biggl|\ln\bigg(\frac{1-\overline{r_1(k_0)}\overline{r_2(k_0)}}{1-r_1(k_0)r_2(k_0)}\bigg)\Biggr| \biggl|\ln\frac{k_0}{2}\biggr|.
    \end{align}
    Symmetry (\ref{r1r2symm}) implies that
    \begin{align}\label{r10r20}
      r_j(0)=\overline{r_j(0)}, \qquad j=1,2,
    \end{align}
    thus $\chi(\zeta, k)$ is also uniformly bounded with respect
    to $\zeta \in \mathcal{I}$ and
    $k \in \bigl\{k' \in \C \big| \frac{k_0}{2}\leq |k'+k_0| < G \bigr\}$
    by (\ref{chileq}) and (\ref{r10r20})
  \end{proof}

  \begin{remark}\label{nmkdvdiff}
    In the analysis of the local equations, the chief difference
    from the nonlocal case is that $1-r_1(k)r_2(k)$ is complex-valued,
    that is, $\im \nu(\zeta) \neq 0$. We will find below that $\im \nu(\zeta)$
    contributes to both the leading order and the error terms in
    the asymptotics for the nonlocal mKdV equation.
  \end{remark}

  Conjugating the RH problem (\ref{Mrhp}) by
  \begin{equation}\label{deltasig3}
    \delta(\zeta,k)^{-\sigma_3}=
    \begin{pmatrix}
      \delta(\zeta,k)^{-1} & 0\\
      0 & \delta(\zeta,k)
    \end{pmatrix}
  \end{equation}
  leads to the factorization problem for $\tilde{M}(x,t,k)=M(x,t,k)\delta(\zeta,k)^{-\sigma_3}$,
  \begin{equation}\label{tildeMrhp}
    \begin{cases}
      \tilde{M}_+(x,t,k)=\tilde{M}_-(x,t,k) \tilde{J}(x,t,k), \qquad k \in \R,\\
      \tilde{M}(x,t,k) \to I, \qquad k \to \infty,
    \end{cases}
  \end{equation}
  where
  \begin{align}\label{tildeJ}
    \tilde{J}=
    \begin{cases}
      \begin{pmatrix}
        1 & -\delta(\zeta,k)^2r_2(k)e^{-t\Phi(\zeta,k)}\\
        0 & 1
      \end{pmatrix}
      \begin{pmatrix}
        1 & 0\\
        \delta(\zeta,k)^{-2}r_1(k)e^{t\Phi(\zeta,k)} & 1
      \end{pmatrix}, \,\, &|k|>k_0,\\
      \begin{pmatrix}
        1 & 0\\
        \delta_-(\zeta,k)^{-2}r_3(k)e^{t\Phi(\zeta,k)} & 1
      \end{pmatrix}
      \begin{pmatrix}
        1 & -\delta_+(\zeta,k)^2r_4(k)e^{-t\Phi(\zeta,k)}\\
        0 & 1
      \end{pmatrix}, \,\, &|k|<k_0,
    \end{cases} \qquad k \in \R,
  \end{align}
  with
  \begin{equation}\label{r3r4}
    r_3(k)=\frac{r_1(k)}{1-r_1(k)r_2(k)}, \qquad
    r_4(k)=\frac{r_2(k)}{1-r_1(k)r_2(k)}.
  \end{equation}

  {\bf Step 2}
  In accordance with the local case,
  we introduce oriented counter $\Gamma$ and open sets $\{V_j\}_1^6$
  as depicted in Figure \ref{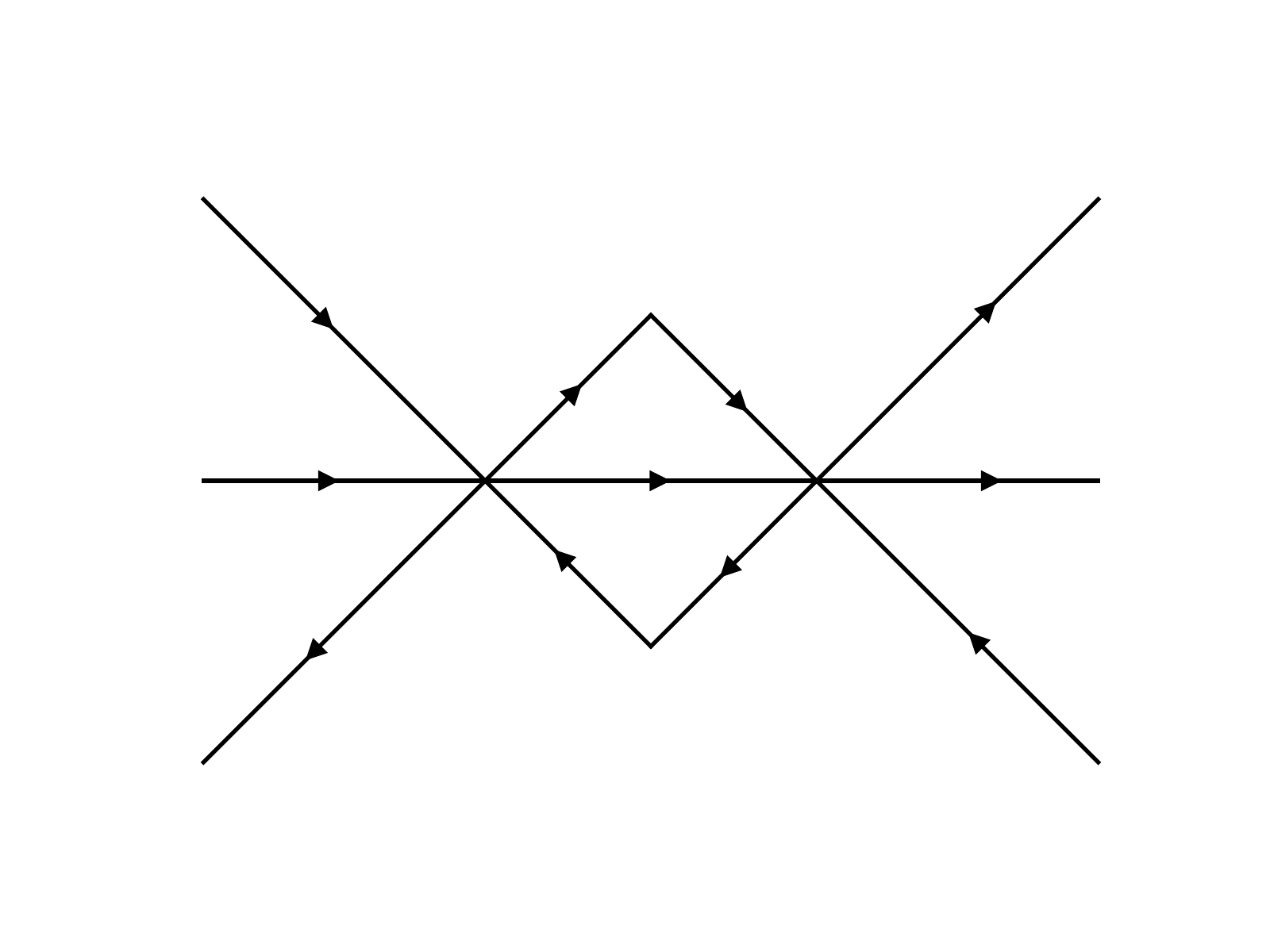},
  \begin{figure}
    \begin{center}
      \begin{overpic}[width=.7\textwidth]{Gamma.pdf}
        \put(23,41){$V_1$}
        \put(75,41){$V_1$}
        \put(50,58){$V_2$}
        \put(50,40){$V_3$}
        \put(50,30){$V_4$}
        \put(50,14){$V_5$}
        \put(23,30){$V_6$}
        \put(75,30){$V_6$}
        \put(35,31){$-k_0$}
        \put(63,31){$k_0$}
        \put(90,36){$\Gamma$}
      \end{overpic}
      \begin{figuretext}\label{Gamma.pdf}
        The jump contour $\Gamma$ and the open sets $\{V_j\}_1^6$
      \end{figuretext}
    \end{center}
  \end{figure}
  and define $m(x,t,k)$ by
  \begin{align}\label{m}
    m(x,t,k)=
    \begin{cases}
      \tilde{M}(x,t,k)
      \begin{pmatrix}
        1 & 0\\
        -\delta(\zeta,k)^{-2}r_{1,a}(x,t,k)e^{t\Phi(\zeta,k)} & 1
      \end{pmatrix},\qquad &k \in V_1,\\
      \tilde{M}(x,t,k)
      \begin{pmatrix}
        1 & \delta(\zeta,k)^2r_{4,a}(x,t,k)e^{-t\Phi(\zeta,k)}\\
        0 & 1
      \end{pmatrix},\qquad &k \in V_3,\\
      \tilde{M}(x,t,k)
      \begin{pmatrix}
        1 & 0\\
        \delta(\zeta,k)^{-2}r_{3,a}(x,t,k)e^{t\Phi(\zeta,k)} & 1
      \end{pmatrix},\qquad &k \in V_4,\\
      \tilde{M}(x,t,k)
      \begin{pmatrix}
        1 & -\delta(\zeta,k)^2r_{2,a}(x,t,k)e^{-t\Phi(\zeta,k)}\\
        0 & 1
      \end{pmatrix},\qquad &k \in V_6,\\
      \tilde{M}(x,t,k),\qquad &elsewhere,
    \end{cases}
  \end{align}
  where $r_{j,a}$ is the analytic approximation of $r_j$
  with small error $r_{j,r},\,j=1,\cdots,4$.
  More precisely, since $\{r_j(k)\}_1^4$ are sufficiently smooth and
  decaying, we can closely follow the proof of Lemma 4.8 in \cite{L}
  to obtain similar decompositions:

  Dividing the complex $k$-plane into four parts $U_j,\, j=1,\cdots,4$
  as in Figure \ref{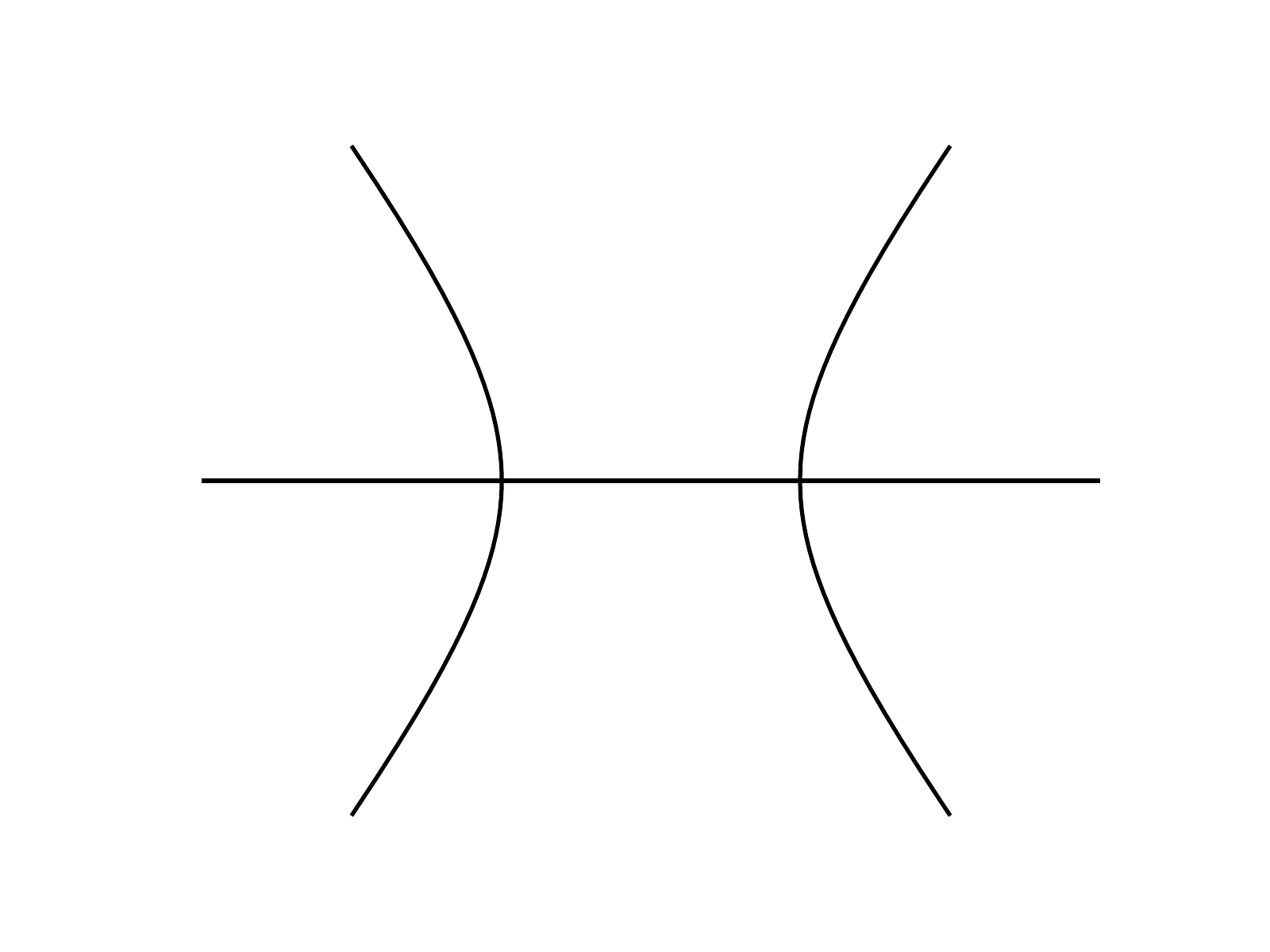} so that
  \begin{equation}\label{U1234}
    \{k|\re \Phi(\zeta,k)<0\}=U_1 \cup U_3, \qquad
    \{k|\re \Phi(\zeta,k)>0\}=U_2 \cup U_4,
  \end{equation}
  \begin{figure}
    \begin{center}
      \begin{overpic}[width=.7\textwidth]{U.pdf}
        \put(25,46){$U_1$}
        \put(72.5,46){$U_1$}
        \put(25,24){$U_2$}
        \put(72.5,24){$U_2$}
        \put(49.5,24){$U_3$}
        \put(49.5,46){$U_4$}
        \put(29,32){$-k_0$}
        \put(65,32){$k_0$}
        \end{overpic}
      \begin{figuretext}\label{U.pdf}
        The domains $\{U_j\}_1^4$ in the complex $k$-plane. $\re \Phi = 0$ on the curves.
      \end{figuretext}
    \end{center}
  \end{figure}
  we can introduce decompositions
  \begin{equation}\label{rjrjarjr}
    r_j(k)=
    \begin{cases}
      r_{j,a}(x,t,k)+r_{j,r}(x,t,k), \qquad j=1,2, \, |k|>k_0, k \in \R,\\
      r_{j,a}(x,t,k)+r_{j,r}(x,t,k), \qquad j=3,4, \, |k|<k_0, k \in \R,
    \end{cases}
  \end{equation}
  such that
  \begin{enumerate}
    \item $r_{j,a}(x,t,k)$ is defined and continuous for $k \in \overline{U}_j$,
    analytic for $k \in U_j$, and for each $K>0$ satisfies
    \begin{align}\nonumber
      |r_{j,a}(x,t,k)-r_j(k_0)| \leq &C_K|k-k_0|e^{\frac{t}{4}|\re \Phi(\zeta,k)|},\\
      \label{rja-rjk0}
      &k \in \overline{U}_j, \quad |k| \leq K, \quad \zeta \in \mathcal{I}, \quad t>0, \quad j=1,\cdots,4,
    \end{align}
    where the constant $C$ is independent of $\zeta, t, k$.
    \item $r_{1,a}$ and $r_{2,a}$ satisfy
    \begin{equation}\label{r1ar2a}
      |r_{j,a}(x,t,k)| \leq \frac{C}{1+|k|}e^{\frac{t}{4}|\re \Phi(\zeta,k)|},
      \qquad k \in \overline{U}_j, \quad \zeta \in \mathcal{I}, \quad t>0, \quad j=1,2,
    \end{equation}
    where the constant $C$ is independent of $\zeta, t, k$.
    \item $r_{1,r}$ and $r_{2,r}$ satisfy
    \begin{align}\nonumber
      |r_{j,r}(x,t,k)|\leq C\frac{|k-k_0|}{1+|k|^2}t^{-3/2},\qquad
      &k\in(-\infty, -k_0) \cup (k_0, \infty),\\\label{r1rr2r}
      &\zeta\in\mathcal{I}, \quad t>0, \quad j=1,2,
    \end{align}
    where the constant $C$ is independent of $\zeta, t, k$.
    \item $r_{3,r}$ and $r_{4,r}$ satisfy
    \begin{align}\nonumber
      |r_{j,r}(x,t,k)|\leq C|k^2-k_0^2|t^{-3/2},\qquad
      &k\in(-k_0, k_0),\\\label{r3rr4r}
      &\zeta\in\mathcal{I}, \quad t>0, \quad j=1,2,
    \end{align}
    where the constant $C$ is independent of $\zeta, t, k$.
    \item The following symmetries are valid:
    \begin{equation}\label{rjarjrsymm}
      r_{j,a}(\zeta,t,k)=\overline{r_{j,a}(\zeta,t,-\overline{k})}, \,\,
      r_{j,r}(\zeta,t,k)=\overline{r_{j,r}(\zeta,t,-\overline{k})}, \qquad
      j=1,\cdots,4.
    \end{equation}
  \end{enumerate}

  As a result, the function $m(x,t,k)$ satisfies the RH problem
  \begin{equation}\label{mrhp}
    \begin{cases}
      m_+(x,t,k)=m_-(x,t,k)v(x,t,k), \qquad k\in\Gamma,\\
      m(x,t,k) \to I, \qquad k\to\infty,
    \end{cases}
  \end{equation}
  where the jump matrix is
  \begin{align}\label{v}
    v(x,t,k)=
    \begin{cases}
      \begin{pmatrix}
        1 & 0\\
        \delta(\zeta,k)^{-2}r_{1,a}(x,t,k)e^{t\Phi(\zeta,k)} & 1
      \end{pmatrix},\quad &k \in \overline{V}_1\cap\overline{V}_2,\\
      \begin{pmatrix}
        1 & -\delta(\zeta,k)^2r_{4,a}(x,t,k)e^{-t\Phi(\zeta,k)}\\
        0 & 1
      \end{pmatrix},\quad &k \in \overline{V}_2\cap\overline{V}_3,\\
      \begin{pmatrix}
        1 & 0\\
        -\delta(\zeta,k)^{-2}r_{3,a}(x,t,k)e^{t\Phi(\zeta,k)} & 1
      \end{pmatrix},\quad &k \in \overline{V}_4\cap\overline{V}_5,\\
      \begin{pmatrix}
        1 & \delta(\zeta,k)^2r_{2,a}(x,t,k)e^{-t\Phi(\zeta,k)}\\
        0 & 1
      \end{pmatrix},\quad &k \in \overline{V}_5\cap\overline{V}_6,\\
      \begin{pmatrix}
        1-r_{1,r}(x,t,k)r_{2,r}(x,t,k) & -\delta(\zeta,k)^2r_{2,r}(x,t,k)e^{-t\Phi(\zeta,k)}\\
        \delta(\zeta,k)^{-2}r_{1,r}(x,t,k)e^{t\Phi(\zeta,k)} & 1
      \end{pmatrix},\quad &k \in \overline{V}_1\cap\overline{V}_6,\\
      \begin{pmatrix}
        1 & -\delta(\zeta,k)^2r_{4,r}(x,t,k)e^{-t\Phi(\zeta,k)}\\
        \delta(\zeta,k)^{-2}r_{3,r}(x,t,k)e^{t\Phi(\zeta,k)} & 1-r_{3,r}(x,t,k)r_{4,r}(x,t,k)
      \end{pmatrix},\quad &k \in \overline{V}_3\cap\overline{V}_4.\\
    \end{cases}
  \end{align}
  By (\ref{deltasymm}) and (\ref{rjarjrsymm}), $v(\zeta,t,k)$ satisfies the symmetry
  \begin{equation}\label{vsymm}
    v(\zeta,t,k)=\overline{v(\zeta,t,-\overline{k})},\qquad \zeta\in\mathcal{I},\quad
    t>0,\quad k\in\Sigma.
  \end{equation}

  {\bf Step 3}
  Let $m^X(\zeta,z)$ be the solution of the RH problem in the
  complex $z$-plane:
  \begin{equation}\label{mXrhp}
    \begin{cases}
      m^X(\zeta,z)_+=m^X(\zeta,z)_-v^X(\zeta,z), \qquad z \in X,\\
      m^X(\zeta,z) \to I, \qquad z \to \infty,
    \end{cases}
  \end{equation}
  where contour $X=X_1\cup\cdots\cup X_4$ is shown in Figure \ref{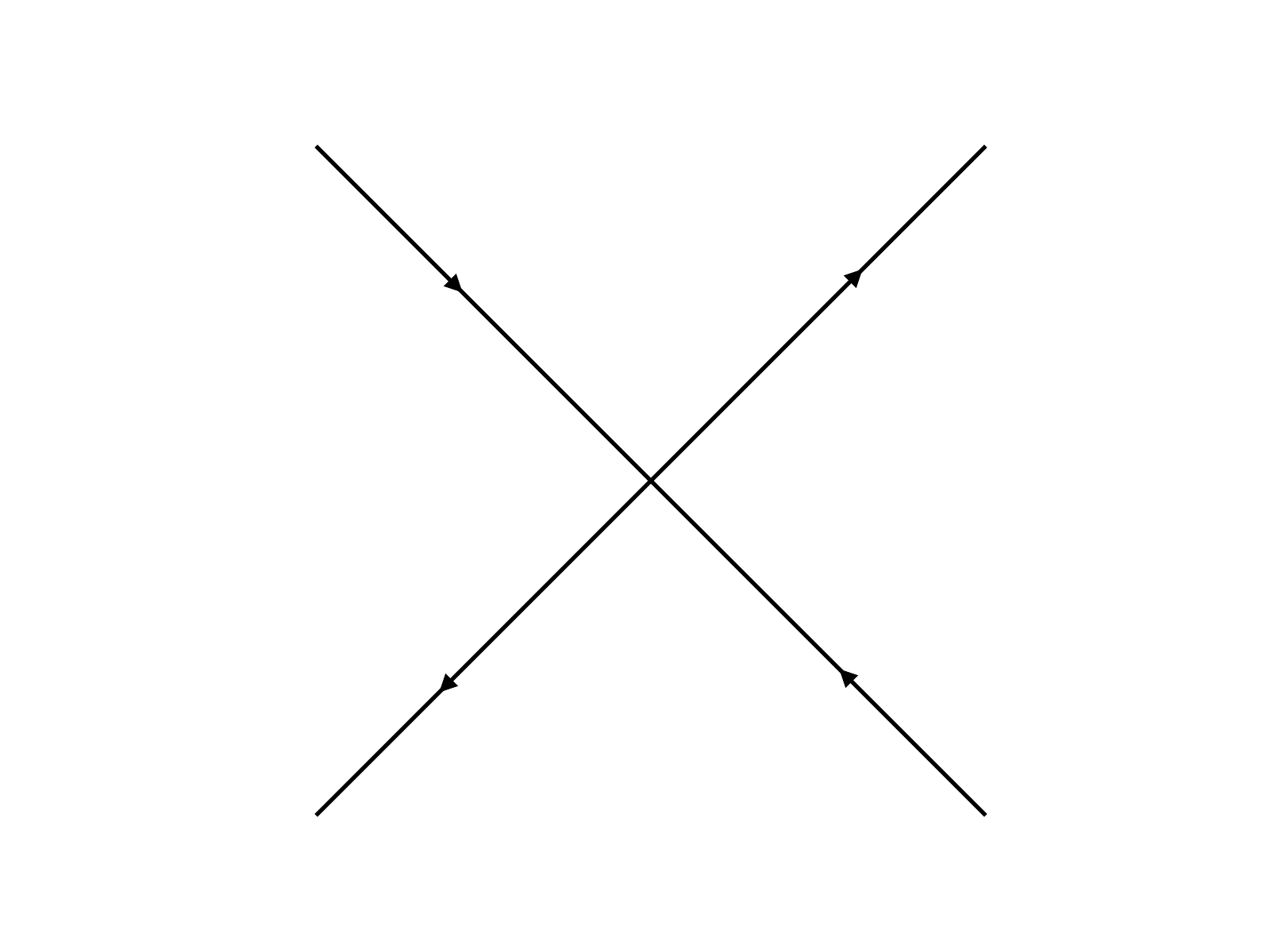}.
  \begin{figure}
    \begin{center}
      \begin{overpic}[width=.7\textwidth]{X.pdf}
        \put(62,54){$X_1$}
        \put(62,18){$X_2$}
        \put(36,18){$X_3$}
        \put(36,54){$X_4$}
        \put(50.5,32){0}
        \end{overpic}
      \begin{figuretext}\label{X.pdf}
        Contour $X$ in the complex $z$-plane
      \end{figuretext}
    \end{center}
  \end{figure}

  The jump matrix is
  \begin{align}\label{vX}
    v^X(\zeta,z)=
    \begin{cases}
      \begin{pmatrix}
        1 & 0\\
        q_1(\zeta)z^{-2i\nu(\zeta)}e^{\frac{iz^2}{2}} & 1
      \end{pmatrix},\qquad & z \in X_1,\\
      \begin{pmatrix}
        1 & q_2(\zeta)z^{2i\nu(\zeta)}e^{-\frac{iz^2}{2}}\\
        0 & 1
      \end{pmatrix},\qquad & z \in X_2,\\
      \begin{pmatrix}
        1 & 0\\
        -\frac{q_1(\zeta)}{q_1(\zeta)q_2(\zeta)}z^{-2i\nu(\zeta)}e^{\frac{iz^2}{2}} & 1
      \end{pmatrix},\qquad & z \in X_3,\\
      \begin{pmatrix}
        1 & -\frac{q_2(\zeta)}{q_1(\zeta)q_2(\zeta)}z^{2i\nu(\zeta)}e^{-\frac{iz^2}{2}}\\
        0 & 1
      \end{pmatrix},\qquad & z \in X_4.
    \end{cases}
  \end{align}
  We point out that in the local case (see \cite{L}), $q_1$ and $q_2$
  are defined as
  \begin{subequations}\label{q}
    \begin{align}\nonumber
      q_1(\zeta)&=\delta(\zeta,k)^{-2}r(k)\biggl(\frac{k-k_0}{\sqrt{48k_0}}\biggr)^{2i\nu(\zeta)}\Bigg|_{k=k_0}\\
      &=e^{-\chi(\zeta,k_0)}r(k_0)e^{2i\nu(\zeta)\ln(2\sqrt{48}k_0^{3/2})},\\
      q_2(\zeta)&=\overline{q_1(\zeta)}.
    \end{align}
  \end{subequations}
  By contrast, to keep the blondness of the function
  $q_j(\zeta),\zeta\in\mathcal{I},j=1,2$ for the nonlocal case,
  we let
  \begin{subequations}\label{q1q2}
    \begin{align}\nonumber
      q_1(\zeta)&=\delta(\zeta,k)^{-2}r_1(k)\biggl(\frac{2(k-k_0)}{k_0}\biggr)^{2i\nu(\zeta)}\Bigg|_{k=k_0}\\
      &=e^{-\chi(\zeta,k_0)}r_1(k_0)e^{2i\nu(\zeta)\ln4},\\\nonumber
      q_2(\zeta)&=\delta(\zeta,k)^2r_2(k)\biggl(\frac{2(k-k_0)}{k_0}\biggr)^{-2i\nu(\zeta)}\Bigg|_{k=k_0}\\
      &=e^{\chi(\zeta,k_0)}r_2(k_0)e^{-2i\nu(\zeta)\ln4}.
    \end{align}
  \end{subequations}
  From Lemma \ref{a1}, the unique
  solution $m^X(\zeta,z)$ of the RH problem (\ref{mXrhp}) can be
  explicitly expressed in terms of parabolic-cylinder function.
  Together with $D(\zeta,t)$ defined by
  \begin{equation}\label{D}
    D(\zeta,t)=e^{-\frac{t\phi(\zeta,0)}{2}\sigma_3}\tau^{-\frac{i\nu(\zeta)}{2}\sigma_3},
  \end{equation}
  where
  \begin{align}\nonumber
    \epsilon=\frac{k_0}{2}, \qquad \rho=\epsilon\sqrt{48k_0}, \qquad
    \tau=t\rho^2=12k_0^3t,\\\label{eps,rho,tau,phi}
    \phi(\zeta,z)=\Phi\biggl(\zeta,k_0+\frac{\epsilon}{\rho}z\biggr)
    =-16ik_0^3+\frac{iz^2}{2}+\frac{iz^3}{12\rho},
  \end{align}
  we use $m^X(\zeta,z)$ to introduce $m_0(\zeta,t,k)$ for $k$ near
  $k_0$:
  \begin{equation}\label{m0}
    m_0(\zeta,t,k)=D(\zeta,t)m^X\bigl(\zeta,\frac{\sqrt{\tau}}{\epsilon}(k-k_0)\bigr)D(\zeta,t)^{-1}, \qquad |k-k_0|\leq\epsilon,
  \end{equation}
  and extend it to a neighborhood of $-k_0$ by symmetry:
  \begin{equation}\label{m0-k0}
    m_0(\zeta,t,k)=\overline{m_0(\zeta,t,-\overline{k})}, \qquad |k+k_0|\leq\epsilon.
  \end{equation}

  \begin{remark}\label{Ddiff}
    The method introduced in \cite{L} cannot be imitated
    indiscriminately to deal with the situation where $\im \nu(\zeta) \neq 0$.
    In \cite{L} $D(\zeta,t)$ is defined by
    \begin{equation}\label{localD}
      D(\zeta,t)=e^{-\frac{t\phi(\zeta,0)}{2}\sigma_3}t^{-\frac{i\nu(\zeta)}{2}\sigma_3}.
    \end{equation}
    Notice (\ref{D}), we replace $t$ by $\tau$ in (\ref{localD})
    to define $D(\zeta,t)$. Actually, our adjustment including that to
    the function $q_j(\zeta),j=1,2$ (see (\ref{q})
    and (\ref{q1q2})) is also valid for the study of the local case.
  \end{remark}

  Then we use $m_0(\zeta,t,k)$ to introduce function $\hat{m}(\zeta,t,k)$:
  \begin{align}\label{hatm}
    \hat{m}(\zeta,t,k)=
    \begin{cases}
      m(\zeta,t,k)m_0(\zeta,t,k)^{-1}, \qquad & |k\pm k_0|<\epsilon,\\
      m(\zeta,t,k), \qquad & elsewhere.
    \end{cases}
  \end{align}
  By the RH problems (\ref{m}) and the definition of $m_0$,
  $\hat{m}(\zeta,t,k)$ satisfies the following RH problem
  \begin{equation}\label{hatmrhp}
    \begin{cases}
      \hat{m}(\zeta,t,k)_+=\hat{m}(\zeta,t,k)_-\hat{v}(\zeta,t,k), \qquad k \in \hat{\Gamma},\\
      \hat{m}(\zeta,t,k) \to I, \qquad k \to \infty,
    \end{cases}
  \end{equation}
  where $\hat{\Gamma}=\Gamma \cup \{k \, | \, |k \pm k_0| = \epsilon \}$
  is oriented as in Figure \ref{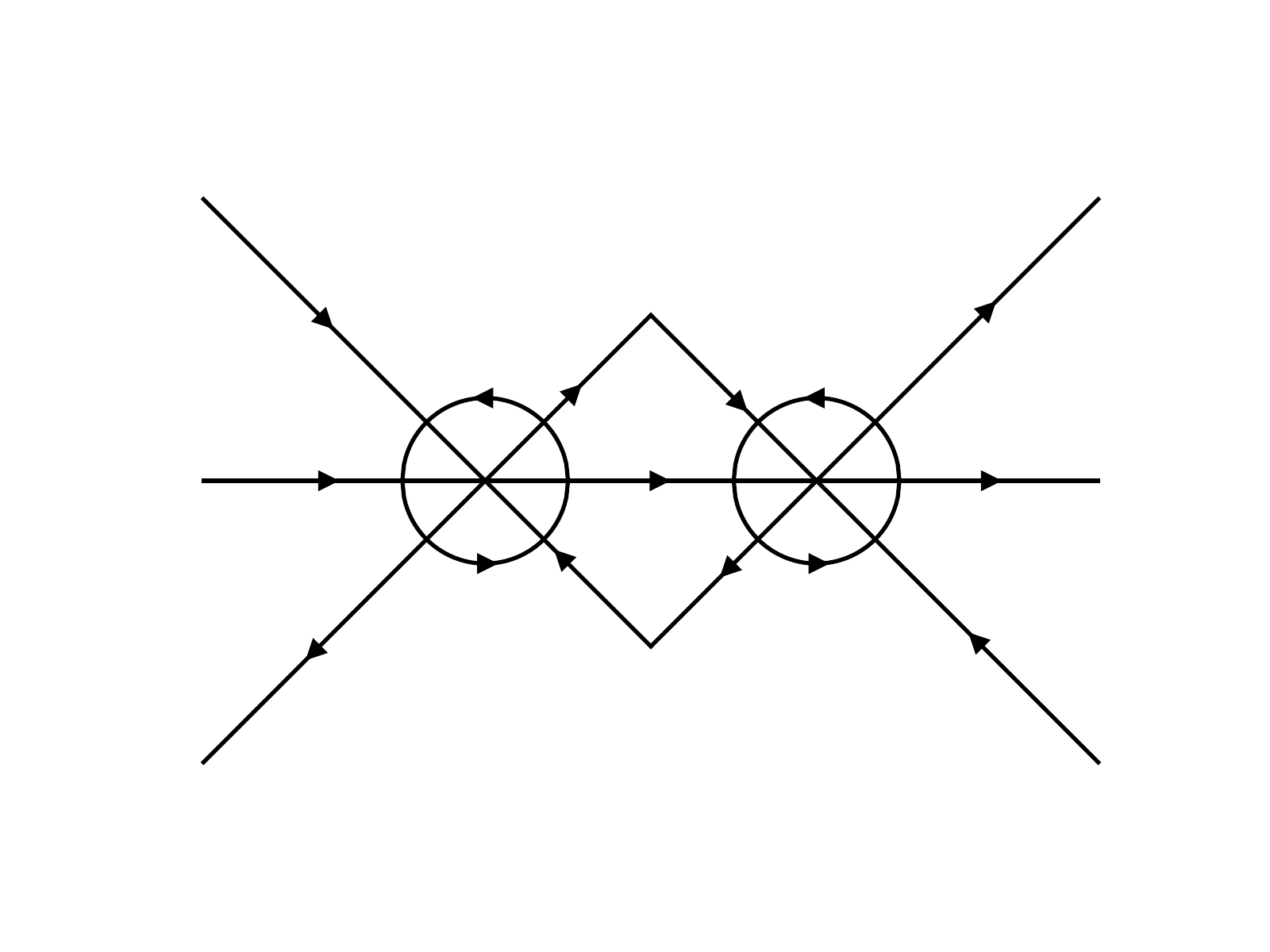},
  \begin{figure}
    \begin{center}
      \begin{overpic}[width=.7\textwidth]{hatGamma.pdf}
        \put(35,32.5){$-k_0$}
        \put(63,32.5){$k_0$}
        \put(90,36){$\hat{\Gamma}$}
      \end{overpic}
      \begin{figuretext}\label{hatGamma.pdf}
        The jump contour $\hat{\Gamma}$
      \end{figuretext}
    \end{center}
  \end{figure}
  and the jump matrix is
  \begin{align}\label{hatGamma}
    \hat{v}(\zeta,t,k)=
    \begin{cases}
      m_{0-}(\zeta,t,k)v(\zeta,t,k)m_{0+}(\zeta,t,k)^{-1}, & \qquad |k_0\pm k_0|<\epsilon,\\
      m_0(\zeta,t,k), & \qquad |k_0\pm k_0|=\epsilon,\\
      v(\zeta,t,k), & \qquad elsewhere.
    \end{cases}
  \end{align}
  The model RH problem (\ref{hatmrhp}) is finally obtained.

  \section{Long time asymptotics}\nequation\label{lta}
  We use the model RH problem (\ref{hatmrhp}) to derive the asymptotics
  of the nonlocal mKdV equation in the similarity sector.

  Let $\Sigma$ denote the counter $\Sigma=\Sigma_1 \cup \cdots \cup \Sigma_6 \subset \C$,
  where
  \begin{align} \nonumber
    &\Sigma_1 = \bigl\{se^{\frac{i\pi}{4}}\, \big| \, 0 \leq s < \infty\bigr\}, &&
    \Sigma_2 = \bigl\{se^{-\frac{i\pi}{4}}\, \big| \, 0 \leq s < \infty\bigr\},
      \\ \nonumber
    &\Sigma_3 = \bigl\{se^{-\frac{3i\pi}{4}}\, \big| \, 0 \leq s < \infty\bigr\}, &&
    \Sigma_4 = \bigl\{se^{\frac{3i\pi}{4}}\, \big| \, 0 \leq s < \infty\bigr\},
      \\ \label{Sigmadef}
    &\Sigma_5 = \bigl\{s\, \big| \, 0 \leq s < \infty\bigr\}, &&
    \Sigma_6 = \bigl\{-s\, \big| \, 0 \leq s < \infty\bigr\},
  \end{align}
  are oriented as in Figure \ref{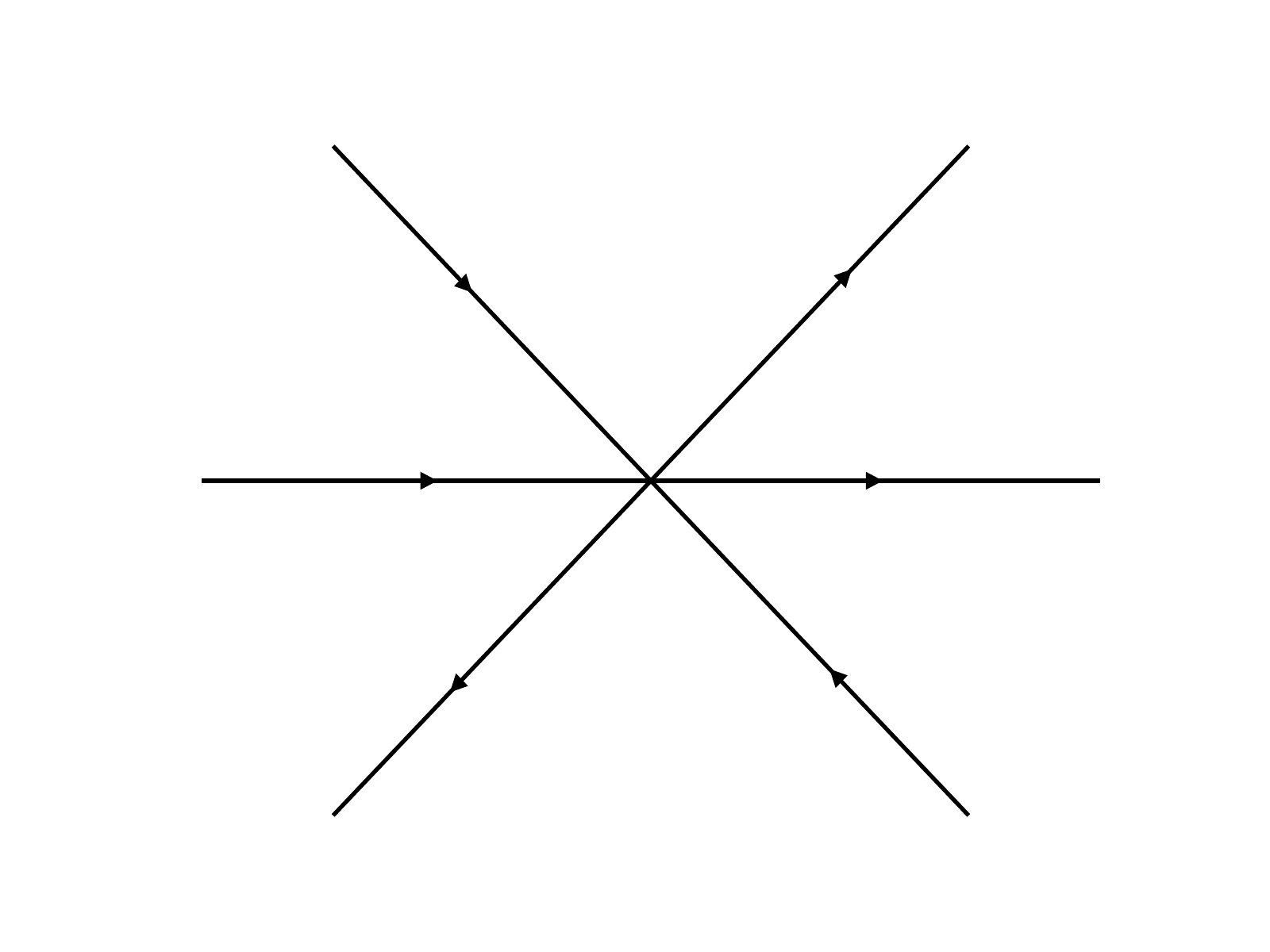}. For $r>0$, we denote
  $\Sigma^r=\Sigma_1^r\cup\cdots\cup\Sigma_6^r$, where
  $\Sigma_j^r=\Sigma_j \cap D(0,r),\,j=1,\cdots,6$ and $D(k,r)$ is the disk of
  radius $r$ centered at $k$.

  \begin{figure}
    \begin{center}
      \begin{overpic}[width=.75\textwidth]{Sigma.pdf}
        \put(64,58){$\Sigma_1$}
        \put(64,15){$\Sigma_2$}
        \put(34,15){$\Sigma_3$}
        \put(34,58){$\Sigma_4$}
        \put(74,39){$\Sigma_5$}
        \put(25,39){$\Sigma_6$}
      \end{overpic}
      \bigskip
      \begin{figuretext}\label{Sigma.pdf}
        The contour $\Sigma = \Sigma_1 \cup \cdots \cup \Sigma_6$.
      \end{figuretext}
    \end{center}
  \end{figure}

  \begin{lemma}\label{hatwlponGammaminusGammaSigma}
    Let $\Gamma_\Sigma=\pm k_0+\Sigma^\epsilon$
    and let $\Gamma'=\Gamma\setminus\Gamma_\Sigma$ as shown in Figure
    \ref{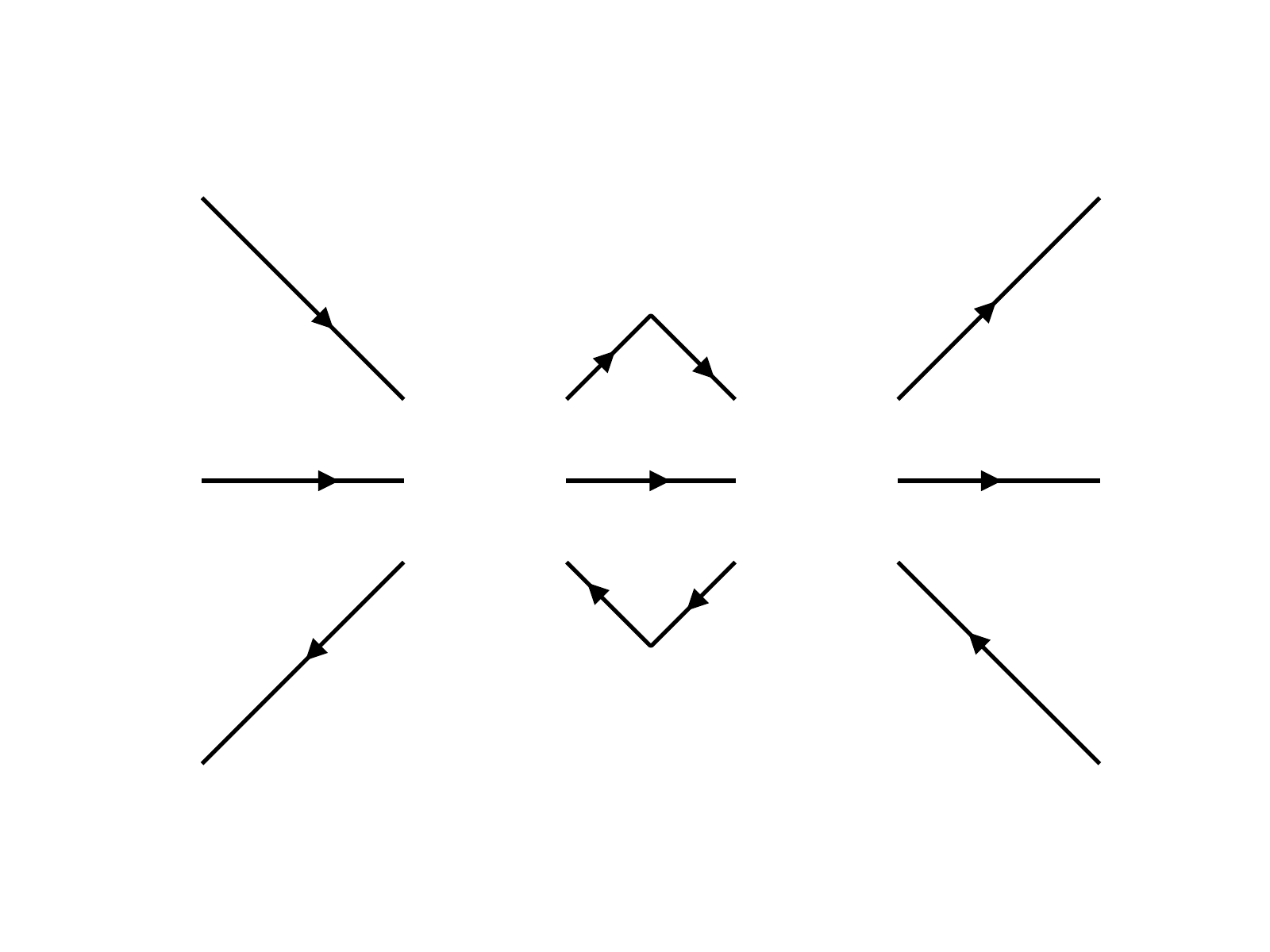}.
    Let $\hat{w}(\zeta,t,k)=\hat{v}(\zeta,t,k)-I$, then
    $\hat{w}(\zeta,t,k)$ satisfies
    \begin{align}\label{wlpongammaminusgammasigma}
      \begin{cases}
        \Vert \hat{w}(\zeta,t,\cdot) \Vert_{L^p(\Gamma')}=O(\epsilon^{\frac{1}{p}}
        \tau^{-1}),\qquad p=1,2,\\
        \Vert \hat{w}(\zeta,t,\cdot) \Vert_{L^{\infty}(\Gamma')}=O(\tau^{-1}),
      \end{cases}
    \end{align}
    uniformly with respect to $\zeta \in \mathcal{I}$, as $\tau \to \infty$.
  \end{lemma}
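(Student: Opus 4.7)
The plan is to first observe that on $\Gamma'=\Gamma\setminus\Gamma_\Sigma$ the jump matrix $\hat v$ agrees with $v$ of (\ref{v}), since the modifications defining $\hat v$ in (\ref{hatGamma}) are supported in the two $\epsilon$-disks around $\pm k_0$, which are excised from $\Gamma'$. I split $\Gamma'$ into two pieces of distinct character: the four diagonal ray segments based at $\pm k_0$ and truncated to distance $\geq \epsilon$, and the real-axis pieces $\R\setminus\bigl((-k_0-\epsilon,-k_0+\epsilon)\cup(k_0-\epsilon,k_0+\epsilon)\bigr)$. On both pieces the factor $\delta(\zeta,k)^{\pm 2}$ is uniformly bounded by combining (\ref{deltachi}), Lemma \ref{chiuniformlybounded}, and (\ref{reinu}): outside the $\epsilon$-disk around $-k_0$ the modulus of $((k-k_0)/(k+k_0))^{i\nu}$ equals $|(k-k_0)/(k+k_0)|^{-\im\nu}$, which is bounded thanks to $|\im\nu|<1/2$, while $e^{\chi}$ is bounded by Lemma \ref{chiuniformlybounded}.

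On the diagonal rays, the entries of $\hat w=\hat v-I$ have the form $\delta^{\pm 2}r_{j,a}(k)e^{\pm t\Phi(\zeta,k)}$. The analytic-approximation bound (\ref{r1ar2a}) gives $|r_{j,a}|\leq Ce^{(t/4)|\re\Phi|}$, which combined with the exponential $e^{-t|\re\Phi|}$ (valid on each ray, whose orientation is chosen so the sign of $\re\Phi$ forces decay) yields $|r_{j,a}e^{\pm t\Phi}|\leq Ce^{-(3t/4)|\re\Phi|}$. Taylor expanding $\Phi$ at $k_0$ using $\Phi'(k_0)=0$ and $\Phi''(k_0)=48ik_0$ gives, on $k=k_0+se^{i\pi/4}$, $\re\Phi(k)-\re\Phi(k_0)=-24k_0 s^2+O(s^3)$; with $\tau=12k_0^3 t$ and $\epsilon=k_0/2$ this translates into $t|\re\Phi|\geq c\tau s^2/\epsilon^2$. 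Thus $|\hat w|\leq Ce^{-c\tau s^2/\epsilon^2}$ on the rays for $s=|k\mp k_0|\geq\epsilon$. This gives $\Vert\hat w\Vert_{L^\infty}\leq Ce^{-c\tau}$, and a Gaussian-tail estimate $\int_\epsilon^\infty e^{-cp\tau s^2/\epsilon^2}\,ds\leq C\epsilon\tau^{-1/2}e^{-cp\tau}$ yields $\Vert\hat w\Vert_{L^p}\leq C\epsilon^{1/p}\tau^{-1/(2p)}e^{-c\tau/p}$; both are absorbed into the claimed bounds since exponentials beat polynomials.

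On the real-axis pieces $|e^{\pm t\Phi}|=1$, so $\hat w$ is controlled by the remainder terms: (\ref{r1rr2r}) gives $|r_{j,r}(k)|\leq Ct^{-3/2}|k-k_0|/(1+|k|^2)$ for $j=1,2$ on $|k|>k_0$, and (\ref{r3rr4r}) gives $|r_{j,r}(k)|\leq Ct^{-3/2}|k^2-k_0^2|$ for $j=3,4$ on $|k|<k_0$. Combined with the boundedness of $\delta^{\pm 2}$, $|\hat w|\leq Ct^{-3/2}\cdot g(k)$ where $g\in L^\infty\cap L^p(\R)$ for $p\geq 1$. Since $k_0$ is bounded on $\mathcal{I}$, the identity $\tau=12k_0^3 t$ yields $t^{-3/2}\leq C\tau^{-1}$ for $\tau$ large, which produces the $L^\infty$ bound. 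The $L^1$ and $L^2$ bounds follow by integrating against $g\in L^p(\R)$; the resulting $O(t^{-3/2})$ contributions are again dominated by $O(\epsilon^{1/p}\tau^{-1})$ for $\tau$ large.

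The main technical care lies in the bookkeeping of $\delta(\zeta,k)^{\pm 2}$ near $k=-k_0$ indicated in the first paragraph: Lemma \ref{chiuniformlybounded} only controls $\chi$ outside a neighborhood of $-k_0$, but since $\Gamma'$ excises the entire $\epsilon$-disk around $-k_0$, the potentially singular factor from (\ref{deltachi}) is tamed by the constraint $|\im\nu|<1/2$ from (\ref{reinu}). Once this estimate on $\delta$ is in hand, the rest of the argument is routine Gaussian-tail and polynomial integration.
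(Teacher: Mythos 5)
Your proposal is correct and follows essentially the same route as the paper: the key step in both is the uniform boundedness of $\delta(\zeta,k)^{\pm 1}$ on $\Gamma'$ (via Lemma \ref{chiuniformlybounded} and the fact that $|(k-k_0)/(k+k_0)|$ stays bounded away from $0$ and $\infty$ once the $\epsilon$-disks are excised), after which the paper simply defers the Gaussian-tail estimates from \eqref{r1ar2a} on the rays and the $t^{-3/2}$ remainder estimates \eqref{r1rr2r}, \eqref{r3rr4r} on the real axis to the analogous argument in \cite{L}, which you instead write out explicitly. The only cosmetic imprecisions are that $|w^{i\nu}|=|w|^{-\im\nu}e^{-\re\nu\arg w}$ rather than $|w|^{-\im\nu}$ (harmless, since $\arg w$ and $\re\nu$ are bounded) and that the bound $|r_{j,a}|\leq Ce^{\frac{t}{4}|\re\Phi|}$ for $j=3,4$ should be drawn from \eqref{rja-rjk0} on the bounded segments rather than from \eqref{r1ar2a}.
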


  \begin{figure}
    \begin{center}
      \begin{overpic}[width=.75\textwidth]{Gamma_minus_Sigma.pdf}
        \put(90,36){$\Gamma'$}
      \end{overpic}
      \bigskip
      \begin{figuretext}\label{Gamma_minus_Sigma.pdf}
        The contour $\Gamma' = \Gamma \setminus \Gamma_{\Sigma^\epsilon}$.
      \end{figuretext}
    \end{center}
  \end{figure}

  \begin{proof}
    Let $\gamma$ denote the intersection of $\Gamma'$ and the line
    $k_0+\R e^{\frac{i\pi}{4}}$, i.e.
    \begin{equation}\label{gamma}
      \gamma = \Bigl\{k_0+ue^{\frac{i\pi}{4}}\Big|u\in
      \Bigl(-\sqrt{2}k_0, -\frac{k_0}{2}\Bigr]\cup
      \Big[\frac{k_0}{2}, \infty\Big)\Bigr\}.
    \end{equation}
    Let $k=k_0+ue^{\frac{i\pi}{4}}$.
    By (\ref{hatGamma}) and (\ref{v}), $\hat{w}$ has the following form on $\gamma$:
    \begin{align}\label{hatwongamma}
      \hat{w}(\zeta,t,k)=\begin{cases}
        \begin{pmatrix}
          0 & 0\\
          \delta(\zeta,k)^{-2}r_{1,a}(x,t,k)e^{t\Phi(\zeta,k)} & 0
        \end{pmatrix}, \qquad &\frac{k_0}{2}\leq u<\infty,\\
        \begin{pmatrix}
          0 & 0\\
          -\delta(\zeta,k)^{-2}r_{3,a}(x,t,k)e^{t\Phi(\zeta,k)} & 0
        \end{pmatrix}, \qquad &-\sqrt{2}k_0<u\leq -\frac{k_0}{2}.
      \end{cases}
    \end{align}
    It's enough to prove that $\delta(\zeta,k)^{\pm1}$ is uniformly bounded
    on $\Gamma'$ with respect to $\zeta \in \mathcal{I}$, i.e.
    \begin{equation}\label{supdeltaleq}
      \sup_{\zeta \in \mathcal{I}}\sup_{k \in \Gamma'} \bigl|\delta(\zeta,k)^{\pm1}\bigr| \leq C
    \end{equation}
    From Lemma \ref{chiuniformlybounded} $\chi(\zeta, k)$ is
    uniformly bounded on $\gamma$ with respect to
    $\zeta \in \mathcal{I}$. Thus
    \begin{align}\label{deltaleq}\nonumber
      \bigl|\delta(\zeta,k)^{\pm1}\bigr| & =\Biggl|\biggl(\frac{k-k_0}{k+k_0}\biggr)^{\pm i\nu}e^{\pm\chi(\zeta,k)}\Biggr|
      \leq C \Biggl|\biggl(\frac{ue^{i\pi/4}}{ue^{i\pi/4}+2k_0}\biggr)^{\pm i\nu}\Biggr|\\
      & \leq C \biggl|\Bigl(1+\frac{2k_0}{u}e^{-i\pi/4}\Bigr)^{\mp i\nu}\biggr|,
    \end{align}
    where $1+\frac{2k_0}{u}e^{-i\pi/4}$ satisfies the following inequalities
    \begin{equation}\label{1+2k0/ue-ipi/4}
      \begin{cases}
        1<\bigl|1+\frac{2k_0}{u}e^{-i\pi/4}\bigr|\leq(17+4\sqrt{2})^{\frac{1}{2}}, \qquad u \in \bigl[\frac{k_0}{2},\infty\bigr),\\
        1<\bigl|1+\frac{2k_0}{u}e^{-i\pi/4}\bigr|\leq(17-4\sqrt{2})^{\frac{1}{2}}, \qquad u \in \bigl(-\sqrt{2}k_0,-\frac{k_0}{2}\bigr].
      \end{cases}
    \end{equation}
    By (\ref{deltaleq}) and (\ref{1+2k0/ue-ipi/4}), $\delta(\zeta,k)^{\pm1}$
    is uniformly bounded on $\gamma$ with respect to
    $\zeta \in \mathcal{I}$. Since similar arguments apply to the
    remaining parts of $\Gamma'$, this prove (\ref{supdeltaleq}).

    Since the decompositions of $r_j$
    ($r_j=r_{j,a}+r_{j,r},j=1,\cdots,4$) is similar to the local case,
    we can follow \cite{L} to accomplish the rest of the proof by
    (\ref{rja-rjk0}), (\ref{r1ar2a}), (\ref{r1rr2r}) and
    (\ref{r3rr4r}).
  \end{proof}

  We normalize the jump matrix $v(\zeta,t,k)$ on $\Gamma_{\Sigma}$
  \begin{equation}\label{v0}
    v_0(\zeta,t,z)= v\biggl(\zeta,t,k_0+\frac{\epsilon z}{\rho}\biggr), \qquad z \in \Sigma^{\rho},
  \end{equation}
  which has the form of
  \begin{equation}\label{normalizedv0}
    v_0(\zeta,t,z)=
    \begin{cases}
      \begin{pmatrix}
        1 & 0\\
        R_1(\zeta,t,z)\bigl(\frac{z}{\rho}\bigr)^{-2i\nu(\zeta)}e^{t\phi(\zeta,z)} & 1
      \end{pmatrix}, &\qquad z \in \Sigma_1^{\rho},\\
      \begin{pmatrix}
        1 & R_2(\zeta,t,z)\bigl(\frac{z}{\rho}\bigr)^{2i\nu(\zeta)}e^{-t\phi(\zeta,z)}\\
        0 & 1
      \end{pmatrix}, &\qquad z \in \Sigma_2^{\rho},\\
      \begin{pmatrix}
        1 & 0\\
        -R_3(\zeta,t,z)\bigl(\frac{z}{\rho}\bigr)^{-2i\nu(\zeta)}e^{t\phi(\zeta,z)} & 1
      \end{pmatrix}, &\qquad z \in \Sigma_3^{\rho},\\
      \begin{pmatrix}
        1 & -R_4(\zeta,t,z)\bigl(\frac{z}{\rho}\bigr)^{2i\nu(\zeta)}e^{-t\phi(\zeta,z)}\\
        0 & 1
      \end{pmatrix}, &\qquad z \in \Sigma_4^{\rho},\\
      \begin{pmatrix}
        1-S_1(\zeta,t,z)S_2(\zeta,t,z) & -S_2(\zeta,t,z)\bigl(\frac{z}{\rho}\bigr)^{2i\nu(\zeta)}e^{-t\phi(\zeta,z)}\\
        S_1(\zeta,t,z)\bigl(\frac{z}{\rho}\bigr)^{-2i\nu(\zeta)}e^{t\phi(\zeta,z)} & 1
      \end{pmatrix}, &\qquad z \in \Sigma_5^{\rho},\\
      \begin{pmatrix}
        1 & -S_4(\zeta,t,z)\bigl(\frac{z}{\rho}\bigr)^{2i\nu(\zeta)}e^{-t\phi(\zeta,z)}\\
        S_3(\zeta,t,z)\bigl(\frac{z}{\rho}\bigr)^{-2i\nu(\zeta)}e^{t\phi(\zeta,z)} & 1-S_3(\zeta,t,z)S_4(\zeta,t,z)
      \end{pmatrix}, &\qquad z \in \Sigma_6^{\rho}.\\
    \end{cases}
  \end{equation}
  where $\epsilon$, $\rho$ and $\phi$ are defined by (\ref{eps,rho,tau,phi}).
  The phase $\phi(\zeta,z)$ is identical to the local case,
  which is a smooth function of $(\zeta,z) \in \mathcal{I} \times \C$
  satisfying condition (2.10) and (2.11) of Theorem 2.1 in \cite{L}.
  Moreover, $\{R_j(\zeta,t,k)\}_1^4$ and $\{S_j(\zeta,t,k)\}_1^4$
  satisfy the following Lemma.
  \begin{lemma}\label{RjSj}
    There exist constants $(\alpha,L) \in [1/2,1) \times (0,\infty)$
    such that the functions $\{R_j(\zeta,t,z)\}_1^4$ and
    $\{S_j(\zeta,t,k)\}_1^4$ satisfy the inequalities:
    \begin{align}\label{lipschitz}
      \begin{cases}
        |R_1(\zeta,t,z)-q_1(\zeta)| \leq L\bigl| \frac{z}{\rho} \bigr|^{\alpha}e^{\frac{t|z|^2}{6}}, \qquad &z \in \Sigma_1^\rho,\\
        |R_2(\zeta,t,z)-q_2(\zeta)| \leq L\bigl| \frac{z}{\rho} \bigr|^{\alpha}e^{\frac{t|z|^2}{6}}, \qquad &z \in \Sigma_2^\rho,\\
        \bigl|R_3(\zeta,t,z)-\frac{q_1(\zeta)}{1-q_1(\zeta)q_2(\zeta)}\bigr| \leq L\bigl| \frac{z}{\rho} \bigr|^{\alpha}e^{\frac{t|z|^2}{6}}, \qquad &z \in \Sigma_3^\rho,\\
        \bigl|R_4(\zeta,t,z)-\frac{q_2(\zeta)}{1-q_1(\zeta)q_2(\zeta)}\bigr| \leq L\bigl| \frac{z}{\rho} \bigr|^{\alpha}e^{\frac{t|z|^2}{6}}, \qquad &z \in \Sigma_4^\rho,\\
        |S_j(\zeta,t,z)| \leq L\big|\frac{z}{\rho}\big|t^{-\frac{3}{2}}, \qquad \qquad \qquad j=1,2, &z \in \Sigma_5^\rho,\quad \\
        |S_j(\zeta,t,z)| \leq L\big|\frac{z}{\rho}\big|t^{-\frac{3}{2}}, \qquad \qquad \qquad j=3,4, &z \in \Sigma_6^\rho,\quad
      \end{cases}\, \zeta \in \mathcal{I}, \quad t>0.
    \end{align}
    where $q_1(\zeta)$ and $q_2(\zeta)$ are defined by (\ref{q1q2}).
  \end{lemma}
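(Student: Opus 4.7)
The plan is first to obtain explicit formulas for $R_j$ and $S_j$ by directly comparing the entries of $v$ in (\ref{v}) with those of $v_0$ in (\ref{normalizedv0}) under the change of variables $k=k_0+\epsilon z/\rho$. Since $\Phi(\zeta,k)=\phi(\zeta,z)$ under this substitution and $\delta(\zeta,k)^{-2}=\bigl(\frac{k-k_0}{k+k_0}\bigr)^{-2i\nu(\zeta)}e^{-2\chi(\zeta,k)}$ by (\ref{deltachi}), a short computation gives
\begin{align*}
R_j(\zeta,t,z)&=\left(\tfrac{k+k_0}{\epsilon}\right)^{2i\nu(\zeta)}e^{-2\chi(\zeta,k)}r_{j,a}(x,t,k),\qquad j=1,3,\\
R_j(\zeta,t,z)&=\left(\tfrac{k+k_0}{\epsilon}\right)^{-2i\nu(\zeta)}e^{2\chi(\zeta,k)}r_{j,a}(x,t,k),\qquad j=2,4,
\end{align*}
and the analogous identities with $r_{j,r}$ replacing $r_{j,a}$ for the $S_j$. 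Evaluating at $z=0$ and using $1-q_1(\zeta)q_2(\zeta)=1-r_1(k_0)r_2(k_0)$ together with $r_{j,a}(x,t,k_0)=r_j(k_0)$ (a consequence of (\ref{rja-rjk0})) confirms that $R_j(\zeta,t,0)$ matches the constants $q_1,q_2,q_1/(1-q_1 q_2),q_2/(1-q_1 q_2)$ from (\ref{q1q2}).

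To establish the first four bounds in (\ref{lipschitz}), I would then carry out the three-term add-and-subtract decomposition
\begin{align*}
R_1-q_1=&\Bigl[\left(\tfrac{k+k_0}{\epsilon}\right)^{2i\nu}-\left(\tfrac{2k_0}{\epsilon}\right)^{2i\nu}\Bigr]e^{-2\chi(\zeta,k_0)}r_1(k_0)\\
&+\left(\tfrac{k+k_0}{\epsilon}\right)^{2i\nu}\bigl[e^{-2\chi(\zeta,k)}-e^{-2\chi(\zeta,k_0)}\bigr]r_1(k_0)\\
&+\left(\tfrac{k+k_0}{\epsilon}\right)^{2i\nu}e^{-2\chi(\zeta,k)}\bigl[r_{1,a}(x,t,k)-r_1(k_0)\bigr],
\end{align*}
and bound the three lines separately. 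On $\Sigma_1^\rho$ the base $(k+k_0)/\epsilon$ stays in a fixed annulus bounded away from $0$ and $\infty$, and $|\im\nu(\zeta)|<1/2$ by (\ref{reinu}), so $z\mapsto((k+k_0)/\epsilon)^{2i\nu}$ is Lipschitz in $z$. Combined with Lemma \ref{chiuniformlybounded} and an $\alpha$-Hölder estimate for $\chi(\zeta,\cdot)$ at $k_0$ (inherited from the assumed smoothness of $r_1 r_2$ via Plemelj-Privalov-type estimates applied to the representation (\ref{tildechi})), this bounds the first two lines by $C|k-k_0|^\alpha$. The third line is bounded by $C|k-k_0|e^{t|\re\Phi|/4}$ via (\ref{rja-rjk0}). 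Substituting $|k-k_0|=\epsilon|z|/\rho$ and observing from the explicit form of $\phi$ in (\ref{eps,rho,tau,phi}) that $|\re\phi(\zeta,z)|/4\leq|z|^2/6$ on $\Sigma_1^\rho$ converts the exponential into $e^{t|z|^2/6}$, producing the claimed bound for $R_1$. The cases $j=2,3,4$ are handled by the same scheme; for $j=3,4$ one additionally uses that $1-r_1(k_0)r_2(k_0)$ is bounded away from $0$ under the standing assumptions.

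For the last two bounds in (\ref{lipschitz}), no subtraction is needed: the prefactor $((k+k_0)/\epsilon)^{\pm 2i\nu}e^{\mp 2\chi(\zeta,k)}$ is uniformly bounded on $\Sigma_5^\rho$ and $\Sigma_6^\rho$ by Lemma \ref{chiuniformlybounded} and (\ref{reinu}), while (\ref{r1rr2r}) and (\ref{r3rr4r}) give $|r_{j,r}(x,t,k)|\leq C|k-k_0|t^{-3/2}$ on the relevant segments (on $\Sigma_6^\rho$ this uses $|k^2-k_0^2|=|k-k_0||k+k_0|\leq C|k-k_0|$). Substituting $|k-k_0|=\epsilon|z|/\rho$ with $\epsilon$ bounded then yields directly $|S_j(\zeta,t,z)|\leq L|z/\rho|t^{-3/2}$.

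The main technical obstacle I anticipate is obtaining a uniform-in-$\zeta$ $\alpha$-Hölder bound for $\chi(\zeta,\cdot)$ at $k_0$. Because $\chi$ is a Cauchy integral over the interval $[-k_0,k_0]$ whose length depends on $\zeta$ and whose endpoint is precisely the base point $k_0$, one cannot simply quote an off-the-shelf Privalov statement: the constants must be tracked carefully so that they do not blow up as $k_0\to 0$ (i.e.\ as $\zeta\to 0^-$). Beyond this refinement, the argument parallels the local treatment in \cite{L}, and the exponent $\alpha\in[1/2,1)$ is inherited from the assumed smoothness of $r_1$ and $r_2$.
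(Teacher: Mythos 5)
Your proposal is correct and follows essentially the same route as the paper's proof: the same explicit identification of $R_j$ and $S_j$ from (\ref{v}) and (\ref{normalizedv0}), the same evaluation at $z=0$, the same three-term add-and-subtract decomposition of $R_1-q_1$, and the same use of (\ref{rja-rjk0}), (\ref{r1ar2a}), (\ref{r1rr2r}), (\ref{r3rr4r}) together with the estimate $|\re\phi(\zeta,ve^{i\pi/4})|\leq \frac{2v^2}{3}$ to produce the factor $e^{t|z|^2/6}$. The one step you flag as the main obstacle --- a uniform-in-$\zeta$ H\"older bound for $\chi(\zeta,\cdot)$ at $k_0$ --- is exactly where the paper adapts Lemma 4.9 of \cite{L}, obtaining $\bigl|e^{-2\chi(\zeta,k)}-e^{-2\chi(\zeta,k_0)}\bigr|\leq C|k-k_0|\bigl(k_0^{-1}+|\ln|k-k_0||\bigr)$ (the extra $k_0^{-1}$ term coming from the singular $\ln(k+k_0)$ piece of $\chi$ in (\ref{chi})), which converts to $L|z/\rho|^{\alpha}$ after substituting $|k-k_0|=\epsilon|z|/\rho$, just as you anticipate.
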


  \begin{proof}
    Equations (\ref{v}) and (\ref{normalizedv0}) imply that
    \begin{equation*}
      \begin{cases}
        R_1(\zeta,t,z)=\delta(\zeta,k)^{-2}r_{1,a}(x,t,k)\bigl(\frac{z}{\rho}\bigr)^{2i\nu(\zeta)},\quad
        R_2(\zeta,t,z)=\delta(\zeta,k)^{2}r_{2,a}(x,t,k)\bigl(\frac{z}{\rho}\bigr)^{-2i\nu(\zeta)},\\
        R_3(\zeta,t,z)=\delta(\zeta,k)^{-2}r_{3,a}(x,t,k)\bigl(\frac{z}{\rho}\bigr)^{2i\nu(\zeta)},\quad
        R_4(\zeta,t,z)=\delta(\zeta,k)^{2}r_{4,a}(x,t,k)\bigl(\frac{z}{\rho}\bigr)^{-2i\nu(\zeta)},\\
        S_1(\zeta,t,z)=\delta(\zeta,k)^{-2}r_{1,r}(x,t,k)\bigl(\frac{z}{\rho}\bigr)^{2i\nu(\zeta)},\quad
        S_2(\zeta,t,z)=\delta(\zeta,k)^{2}r_{2,r}(x,t,k)\bigl(\frac{z}{\rho}\bigr)^{-2i\nu(\zeta)},\\
        S_3(\zeta,t,z)=\delta(\zeta,k)^{-2}r_{3,r}(x,t,k)\bigl(\frac{z}{\rho}\bigr)^{2i\nu(\zeta)},\quad
        S_4(\zeta,t,z)=\delta(\zeta,k)^{2}r_{4,r}(x,t,k)\bigl(\frac{z}{\rho}\bigr)^{-2i\nu(\zeta)}.
      \end{cases}
    \end{equation*}
    Let $k=k_0+\frac{\epsilon}{\rho}z$. Using the expression
    (\ref{deltachi}), we can write
    \begin{equation}\label{R1}
      R_1(\zeta,t,z)=e^{-2\chi(\zeta,k)}r_{1,a}(x,t,k)e^{2i\nu(\zeta)\ln\bigl(\frac{2(k+k_0)}{k_0}\bigr)},\qquad z\in \Sigma_1^{\rho}.
    \end{equation}
    By (\ref{rja-rjk0}), we find
    \begin{align}\label{R10}\nonumber
      R_1(\zeta,t,0)&=e^{-2\chi(\zeta,k_0)}r_1(k_0)e^{2i\nu(\zeta)\ln4}\\
      &=q_1(\zeta).
    \end{align}
    Similar arguments imply that
    \begin{equation*}
      R_2(\zeta,t,0)=q_2(\zeta), \qquad R_3(\zeta,t,0)=\frac{q_1(\zeta)}{1-q_1(\zeta)q_2(\zeta)},
      \qquad R_4(\zeta,t,0)=\frac{q_2(\zeta)}{1-q_1(\zeta)q_2(\zeta)}.
    \end{equation*}
    Note that $z\in\Sigma_1^\rho$ is equivalent to $k\in k_0+\Sigma_1^\epsilon$.
    For $z\in\Sigma_1^\rho$, we have
    \begin{align}\label{R1-q1}\nonumber
      |R_1(\zeta,t,z)-q_1(\zeta)| & \leq \Bigl|e^{-2\chi(\zeta,k)}
      -e^{-2\chi(\zeta,k_0)}\Bigr| \biggl|r_{1,a}(x,t,k)
      e^{2i\nu(\zeta)\ln\bigl(\frac{2(k+k_0)}{k_0}\bigr)}\biggr|\\\nonumber
      & + \Bigl|e^{-2\chi(\zeta,k_0)}\Bigr| |r_{1,a}(x,t,k)-r_1(k_0)|
      \biggl|e^{2i\nu(\zeta)\ln\bigl(\frac{2(k+k_0)}{k_0}\bigr)}\biggr|\\
      & + \Bigl|e^{-2\chi(\zeta,k_0)}r_1(k_0)\Bigr|
      \biggl|1-e^{-2i\nu(\zeta)\ln\bigl(\frac{k+k_0}{2k_0}\bigr)}\biggr|
      \biggl|e^{2i\nu(\zeta)\ln\bigl(\frac{2(k+k_0)}{k_0}\bigr)}\biggr|.
    \end{align}
    From Lemma \ref{chiuniformlybounded}, $e^{2\chi(\zeta,k_0)}$ is
    bounded for $\zeta\in\mathcal{I}$. Moreover, let
    $k=k_0+ue^{i\pi/4}$, we have
    \begin{align*}
      \biggl|e^{2i\nu(\zeta)\ln\bigl(\frac{2(k+k_0)}{k_0}\bigr)}\biggr|=
      \Bigl|\Bigl(4+\frac{2u}{k_0}e^{i\pi/4}\Bigr)^{2i\nu(\zeta)}\Bigr|,
    \end{align*}
    where $4+\frac{2u}{k_0}e^{i\pi/4}$ satisfies
    \begin{align*}
      (17-4\sqrt{2})^{\frac{1}{2}}<\Bigl|4+\frac{2u}{k_0}e^{i\pi/4}\Bigr|\leq4,
      \qquad 0\leq u<\epsilon.
    \end{align*}
    Thus $e^{2i\ln\bigl(\frac{2(k+k_0)}{k_0}\bigr)}$ is uniformly bounded
    with respect to $\zeta\in\mathcal{I}$ and $k\in k_0+\Sigma_1^\epsilon$.

    As in the local case, using (\ref{rja-rjk0}), (\ref{r1ar2a})
    and the estimate
    \begin{equation*}
      |\re\phi(\zeta,ve^{i\pi/4})|\leq\frac{2v^2}{3}, \qquad 0\leq v<\rho,
    \end{equation*}
    we can prove that for $z\in\Sigma_1^\rho$, the following
    inequalities holds:
    \begin{align}\label{r1a-r10}
      &|r_{1,a}(x,t,k)-r_1(k_0)| \leq C\frac{\epsilon|z|}{\rho}e^{\frac{t}{6}|z|^2},\\
      \label{r1}
      &|r_{1,a}(x,t,k)|\leq Ce^{\frac{t}{6}|z|^2}.
    \end{align}
    Thus,
    \begin{align}\nonumber
      |R_1(\zeta,t,z)-q_1(\zeta)|&\leq Ce^{\frac{t|z|^2}{6}}
      \Bigl|e^{-2\chi(\zeta,k)}-e^{-2\chi(\zeta,k_0)}\Bigr|+
      C\frac{\epsilon|z|}{\rho}e^{\frac{t|z|^2}{6}}\\
      \label{R_1-q_1}
      &+C\biggl|1-e^{-2i\nu(\zeta)\ln\bigl(\frac{k+k_0}{2k_0}\bigr)}\biggr|.
    \end{align}
    Employing (\ref{chi}), we have
    \begin{equation}\label{chik-chik0}
      |\chi(\zeta,k)-\chi(\zeta,k_0)|\leq
      C\Big|\ln\big(\frac{k_0+k}{2k_0}\big)\Big|+
      C\Big|\int_{-k_0}^{k_0}\ln\big(\frac{s-k}{s-k_0}\big)d
      \ln (1-r_1(s)r_2(s))\Big|.
    \end{equation}
    Then we can follow the proof of Lemma 4.9 in \cite{L}
    to derive the following inequalities
    \begin{align}\label{lemma4.92}
      &\biggl|1-e^{-2i\nu(\zeta)\ln\bigl(\frac{k+k_0}{2k_0}\bigr)}\biggr|
      \leq Ck_0^{-1}|k-k_0|,\\\label{lemma4.91}
      &\Bigl|e^{-2\chi(\zeta,k)}-e^{-2\chi(\zeta,k_0)}\Bigr|
      \leq C|k-k_0|(k_0^{-1}+|\ln|k-k_0||).
    \end{align}
    Notice that (\ref{lemma4.91}) is slightly different from
    Lemma 4.9 in \cite{L}, which is caused by the first term of
    right-hand side of (\ref{chik-chik0}).

    Using (\ref{R_1-q_1}), (\ref{lemma4.92}) and (\ref{lemma4.91}),
    We can verify (\ref{lipschitz}) in the case of $z\in\Sigma_1^\rho$;
    the case of $z\in\Sigma_j^\rho,\,j=2,3,4$ are similar.
    Using (\ref{r1rr2r}), (\ref{r3rr4r}) and the uniform boundness of
    $e^{-2\chi(\zeta,k)}$ and $e^{2i\ln\bigl(\frac{2(k+k_0)}{k_0}\bigr)}$,
    it's clear that $S_j(\zeta,t,z),\,j=1,\cdots,4$ satisfy the inequalities in
    (\ref{lipschitz}).
  \end{proof}

  \begin{lemma}\label{hatwestimateonGammaSigma}
    On $\Gamma_{\Sigma}$, function $\hat{w}$ satisfies
    \begin{align}\nonumber
      \hat{w}(\zeta,t,k)=O\bigg(\tau^{-\frac{\alpha}{2}+
      |\im \nu(\zeta)|}&e^{-\frac{\tau}{24\epsilon^2}|k\mp k_0|^2}
      \bigg),\\\label{hatwleq1}
      &\tau \to \infty, \quad \zeta \in \mathcal{I}, \quad k \in
      \pm k_0+\displaystyle\bigcup_{j=1}^{4}\Sigma_j^\epsilon,
    \end{align}
    and
    \begin{align}\label{hatwleq2}
      \hat{w}(\zeta,t,k)=O\bigg(\tau^{-\frac{3}{2}+2|\im \nu(\zeta)|}\bigg),
      \qquad
      \tau \to \infty, \quad \zeta \in \mathcal{I}, \quad k \in
      \pm k_0+\displaystyle\bigcup_{j=5}^{6}\Sigma_j^\epsilon,
    \end{align}
    where the error term is uniform with respect to $\zeta\in\mathcal
    {I}$ and $k\in \Gamma_{\Sigma}$.
  \end{lemma}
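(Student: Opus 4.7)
The key identity, immediate from the definition (\ref{hatGamma}), is that on $\Gamma_\Sigma$ away from the two circles of radius $\epsilon$ one has $\hat v = m_{0-}\,v\,m_{0+}^{-1}$. Since $m_0$ itself satisfies the jump relation $m_{0+} = m_{0-}\tilde v_0$ with
\[
\tilde v_0(\zeta,t,k) := D(\zeta,t)\,v^X\!\left(\zeta,\tfrac{\sqrt{\tau}}{\epsilon}(k-k_0)\right) D(\zeta,t)^{-1},
\]
a rearrangement gives $\hat w = \hat v - I = m_{0-}(v - \tilde v_0)\,m_{0+}^{-1}$. The proof thus reduces to two parts: (a) estimating $v-\tilde v_0$ entrywise by invoking Lemma \ref{RjSj}, and (b) controlling the norm of $m_{0\pm}^{\pm 1}$, exploiting the uniform boundedness of $m^X$ from Lemma \ref{a1}.

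For (a), I would first verify that on each of $\pm k_0 + \Sigma_j^\epsilon$ with $j=1,\ldots,4$, both $v$ and $\tilde v_0$ are triangular with ones on the diagonal and differ on the off-diagonal entry only through two substitutions: $R_j(\zeta,t,z)\mapsto q_j(\zeta)$, and the disappearance of the cubic exponential correction $e^{itz^3/(12\rho)}$ from the expansion $e^{t\phi(\zeta,z)} = e^{t\phi(\zeta,0)+itz^2/2+itz^3/(12\rho)}$; here the identity $(z_{m^X}/\sqrt{\tau})^{\pm 2i\nu}\tau^{\pm i\nu} = (z_{v_0}/\rho)^{\pm 2i\nu}$ is what aligns the two scalings. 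Combining Lemma \ref{RjSj}'s bound $|R_j-q_j|\le L|z/\rho|^\alpha e^{t|z|^2/6}$ with the elementary estimate $|e^{itz^3/(12\rho)}-1|\le C|tz^3/\rho|$ and the Gaussian decay $|e^{t\phi(\zeta,0)+itz^2/2}|=e^{-t|z|^2/2}$, I can absorb part of the Gaussian to trade $|z|^\alpha$ for $\tau^{-\alpha/2}$ (and $|z|^3/\rho$ for $\tau^{-1/2}$, subdominant since $\alpha<1$), yielding
\[
|v-\tilde v_0|\le C\,|(z/\rho)^{\pm 2i\nu(\zeta)}|\,\tau^{-\alpha/2}\,e^{-\tau|k\mp k_0|^2/(24\epsilon^2)}
\]
on these four rays, after using $t|z|^2 = \tau|k\mp k_0|^2/\epsilon^2$. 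On $\pm k_0 + \Sigma_5^\epsilon, \pm k_0 + \Sigma_6^\epsilon$ one has $\tilde v_0 = I$, and the $S_j$-bounds of Lemma \ref{RjSj} directly yield $|v-\tilde v_0| = O(t^{-3/2}) = O(\tau^{-3/2})$ (since $t=\tau/\rho^2$ and $\rho$ is bounded on $\mathcal I$).

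For (b), since $t\phi(\zeta,0) = -16itk_0^3$ is purely imaginary, the modulus of the conjugating factor in $D = e^{(\cdot)\sigma_3}$ equals $\tau^{\pm\im\nu(\zeta)}$; thus conjugation by $D$ multiplies the $(1,2)$ and $(2,1)$ entries of $m^X$ by $\tau^{\pm\im\nu(\zeta)}$, giving $\|m_{0\pm}^{\pm 1}\| = O(\tau^{|\im\nu(\zeta)|})$ by Lemma \ref{a1}. On the rays $\Sigma_1,\ldots,\Sigma_4$, since $v-\tilde v_0$ is of $\sigma_\pm$ (triangular) type, only one factor of $\tau^{|\im\nu|}$ survives the matrix product $m_{0-}(v-\tilde v_0)m_{0+}^{-1}$ after pairing against $|(z/\rho)^{\pm 2i\nu}|$, producing (\ref{hatwleq1}). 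On $\Sigma_5,\Sigma_6$, no such cancellation is available and the full $\tau^{2|\im\nu|}$ appears, producing (\ref{hatwleq2}).

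The main obstacle is the delicate bookkeeping of the $\im\nu(\zeta)$-dependent factors: in the local case $\im\nu\equiv 0$ and these powers trivially disappear, but in the nonlocal setting the interplay between $\tau^{\pm\im\nu}$ coming from $D$ and $|z/\rho|^{\pm 2i\nu}$ coming from the scalar RH problem (\ref{deltarhp}) must be tracked carefully so that the net exponent of $\tau$ comes out as $-\alpha/2+|\im\nu|$ on the rays (resp.\ $-3/2+2|\im\nu|$ on $\Sigma_5,\Sigma_6$). As noted in Remark \ref{Ddiff}, the modification from $t^{-i\nu/2\sigma_3}$ to $\tau^{-i\nu/2\sigma_3}$ in the definition of $D$ is exactly what is needed so that the Gaussian decay comes out in the correct variable $e^{-\tau|k\mp k_0|^2/(24\epsilon^2)}$.
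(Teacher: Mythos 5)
Your decomposition $\hat w = m_{0-}(v-\tilde v_0)m_{0+}^{-1}$ is exactly the paper's (there it is written as $D\,m_-^X u_1 (m_+^{X})^{-1}D^{-1}$ with $u_1=D^{-1}v_0D-v^X$), and your part (a) — the entrywise comparison via Lemma \ref{RjSj}, the cancellation $(\sqrt{t}z)^{\pm 2i\nu}\tau^{\mp i\nu}=(z/\rho)^{\pm2i\nu}$, the absorption of $|z|^{\alpha}$ and of the cubic phase correction into the Gaussian, and the $O(\tau^{-3/2})$ bound from the $S_j$'s on $\Sigma_5,\Sigma_6$ — matches the paper's computation. The genuine gap is in part (b): you invoke ``the uniform boundedness of $m^X$ from Lemma \ref{a1}'' to get $\|m_{0\pm}^{\pm1}\|=O(\tau^{|\im\nu(\zeta)|})$, but $m^X(\zeta,\cdot)$ is \emph{not} uniformly bounded; it is singular at the origin, behaving like $z^{-i\nu(\zeta)\sigma_3}$, and Lemma \ref{a1} only asserts boundedness of $m^X(\zeta,z)z^{i\nu(\zeta)\sigma_3}$ on compact sets (the appendix remarks explicitly that this singularity is a feature of the nonlocal case, absent when $\nu$ is real). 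Since the argument of $m^X_{\pm}$ inside $m_{0\pm}$ is $\sqrt{t}z=\tfrac{\sqrt{\tau}}{\epsilon}(k\mp k_0)$ and $\Gamma_\Sigma$ contains points arbitrarily close to $\pm k_0$, your norm bound on $m_{0\pm}^{\pm1}$ fails precisely where $|\sqrt{t}z|$ is small, on every one of the six arcs.

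The paper closes this by splitting into $|\sqrt{t}z|\geq G$ and $|\sqrt{t}z|<G$: in the second regime it factors $m^X_{\pm}=\bigl[m^X_{\pm}(\sqrt{t}z)^{i\nu\sigma_3}\bigr](\sqrt{t}z)^{-i\nu\sigma_3}$ and pushes the singular factor onto the jump, replacing $u_1$ by $u_2=(\sqrt{t}z)^{-i\nu(\zeta)\hat{\sigma}_3}u_1$; the conjugation by $(\sqrt{t}z)^{i\nu\hat\sigma_3}$ exactly cancels the $(\sqrt{t}z)^{\mp2i\nu}$ already sitting in the off-diagonal entries of $u_1$, so $u_2$ obeys the same estimates and the bounded quantity $m^X(\sqrt{t}z)^{i\nu\sigma_3}$ can then be used. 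Your phrase about ``pairing against $|(z/\rho)^{\pm2i\nu}|$'' is in the right spirit, but you deploy that pairing only to reduce $\tau^{2|\im\nu|}$ to $\tau^{|\im\nu|}$ on the four rays, not to neutralize the origin singularity of $m^X$, which is the step actually missing. Apart from this, your accounting of the exponents (one factor $\tau^{|\im\nu|}$ from the triangular jump on the rays, $\tau^{2|\im\nu|}$ on $\Sigma_5,\Sigma_6$) and the identity $t|z|^2=\tau|k\mp k_0|^2/\epsilon^2$ reproduce the paper's statement.
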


  \begin{proof}
    We prove the case of $k\in k_0+\Sigma^\epsilon$. Symmetries (\ref{vsymm})
    and (\ref{m0-k0}) implies that
    \begin{equation}\label{hatwsymm}
      \hat{w}(\zeta,t,k)=\overline{\hat{w}(\zeta,t,-\overline{k})},\qquad
      \zeta\in\mathcal{I},\quad t>0, \quad k\in\hat{\Gamma}.
    \end{equation}
    Thus the case of
    $k\in -k_0+\Sigma^\epsilon$ follows by the above symmetry.
    We let $k=k_0+\frac{\epsilon}{\rho}z,\, z\in\Sigma^\rho$. Then
    \begin{align*}
      \hat{w}(\zeta,t,k)&=m_{0-}(\zeta,t,k)v(\zeta,t,k)m_{0+}(\zeta,
      t,k)^{-1}-I\\
      &=D(\zeta,t)m_-^X(\zeta,\sqrt{t}z)D(\zeta,t)^{-1}v_0(\zeta,t,z)
      D(\zeta,t)m_+^X(\zeta,\sqrt{t}z)^{-1}D(\zeta,t)^{-1}-I
    \end{align*}
    By Lemma \ref{a1}, there exists a constant $G$ such that
    $m_\pm^X(\zeta,\sqrt{t}z)$ is uniformly
    bounded with respect to $\zeta\in\mathcal{I}$ and
    $|\sqrt{t}z|\geq G$;
    $m_\pm^X (\zeta,\sqrt{t}z)(\sqrt{t}z)^{i\nu(\zeta)\sigma_3}$
    is uniformly bounded with respect to $\zeta\in\mathcal{I}$ and
    $|\sqrt{t}z| < G$. Thus we write $\hat{w}$ as
    \begin{align}\label{hatw=<G>=G}
      \hat{w}(\zeta,t,k)=
      \begin{cases}
        D(\zeta,t)m_-^X(\zeta,\sqrt{t}z)u_1(\zeta,t,z)
        m_+^X(\zeta,\sqrt{t}z)^{-1}D(\zeta,t)^{-1},
        \qquad &|\sqrt{t}z|\geq G,\\
        D(\zeta,t)m_-^X(\zeta,\sqrt{t}z)\bigg((\sqrt{t}z)
        ^{i\nu(\zeta)\hat{\sigma_3}}u_2(\zeta,t,z)\bigg)
        m_+^X(\zeta,\sqrt{t}z)^{-1}
        D(\zeta,t)^{-1}, &|\sqrt{t}z| < G.
      \end{cases}
    \end{align}
    where
    \begin{align}\label{u1}
      &u_1(\zeta,t,z)=D(\zeta,t)^{-1}v_0(\zeta,t,z)D(\zeta,t)
      -v^X(\zeta,\sqrt{t}z)\\\label{u2}
      &u_2(\zeta,t,z)=(\sqrt{t}z)^{-i\nu(\zeta)\hat{\sigma_3}}
      u_1(\zeta,t,z)
    \end{align}
    By (\ref{D}), $|D(\zeta,t)|=O(\tau^{\frac{|\im\nu(\zeta)|}{2}})$.
    Consequently, it's enough to prove that
    \begin{align}\label{u1estimate1}
      u_1(\zeta,t,z)=O\Big(\tau^{-\frac{\alpha}{2}}e^
      {-\frac{t|z|^2}{24}}\Big),\qquad \tau\to\infty,\quad \zeta\in
      \mathcal{I},\quad z\in\displaystyle\bigcup_{j=1}^4\Sigma_j^\rho,
      \quad |\sqrt{t}z|\geq G,\\\label{u1estimate2}
      u_1(\zeta,t,z)=O\Big(\tau^{-\frac{3}{2}+|\im\nu(\zeta)|}\Big),
      \qquad \tau\to\infty,\quad \zeta\in\mathcal{I},
      \quad z\in\displaystyle\bigcup_{j=5}^6\Sigma_j^\rho,
      \quad |\sqrt{t}z|\geq G,\\
      \label{u2estimate1}
      u_2(\zeta,t,z)=O\Big(\tau^{-\frac{\alpha}{2}}e^
      {-\frac{t|z|^2}{24}}\Big),\qquad \tau\to\infty,\quad \zeta\in
      \mathcal{I},\quad z\in\displaystyle\bigcup_{j=1}^4\Sigma_j^\rho,
      \quad |\sqrt{t}z|<G,\\\label{u2estimate2}
      u_2(\zeta,t,z)=O\Big(\tau^{-\frac{3}{2}}
      \Big),\qquad \tau\to\infty,\quad \zeta\in
      \mathcal{I},\quad z\in\displaystyle\bigcup_{j=5}^6\Sigma_j^\rho,
      \quad |\sqrt{t}z|<G
    \end{align}
    uniformly with respect to $(\zeta,z)$ in the given ranges.

    For the case of $z\in\Sigma_1^\rho$, we have
    \begin{equation}\label{u1onSigma1}
      u_1(\zeta,t,z)=
      \begin{pmatrix}
        0 & 0\\
        \Big(R_1(\zeta,t,z)e^{t(\phi(\zeta,z)-\phi(\zeta,0))}
        -q_1(\zeta)e^{\frac{itz^2}{2}}\Big)(\sqrt{t}z)^{-2i\nu(\zeta)}
        & 0
      \end{pmatrix}.
    \end{equation}
    So only the (21) entry of $u_1(\zeta,t,z)$ is nonzero, and for
    $|\sqrt{t}z|\geq G$ we find that
    \begin{align}\label{(u1)21eq}\nonumber
      \big|\big(u_1(\zeta,t,z)\big)_{21}\big|&=\Big|R_1(\zeta,t,z)
      e^{t(\phi(\zeta,z)-\phi(\zeta,0))}-q_1(\zeta)
      e^{\frac{itz^2}{2}}\Big|\Big|(\sqrt{t}z)^{-2i\nu(\zeta)}\Big|\\
      \nonumber
      &=\Big|R_1(\zeta,t,z)e^{t\hat{\phi}(\zeta,z)}-q_1(\zeta)\Big|
      e^{-\frac{t|z|^2}{2}}\big|\sqrt{t}z\big|^{2\im\nu(\zeta)}
      e^{\frac{\pi}{2}\re\nu(\zeta)}\\
      &\leq C\bigg(|R_1(\zeta,t,z)-q_1(\zeta)|e^{t\re\hat{\phi}
      (\zeta,z)}+|q(\zeta)|\Big|e^{t\hat{\phi}(\zeta,z)}-1\Big|
      \bigg)e^{-\frac{t|z|^2}{2}}\big|\sqrt{t}z\big|^{2\im\nu(\zeta)},
    \end{align}
    where $\hat{\phi}(\zeta,z)=\phi(\zeta,z)-\phi(\zeta,0)
    -\frac{iz^2}{2}$. As in the local case\cite{L}, we can use the
    inequalities
    \begin{align*}
      &|e^w-1|\leq|w|\max(1,e^{\re w}), \qquad w\in\C,\\
      &\re\hat{\phi}(\zeta,z)\leq \frac{|z|^2}{4},\qquad
      \zeta\in\mathcal{I},\quad z\in\Sigma_1^\rho,
    \end{align*}
    and the boundness of $q_j(\zeta),\,j=1,2$ to find that
    \begin{align}\nonumber
      \big|\big(u_1(\zeta,t,z)\big)_{21}\big|\leq
      C\big(|R_1(\zeta,t,z)-q_1(\zeta)|+Ct&|\hat{\phi}(\zeta,z)|
      \big)e^{-\frac{t|z|^2}{4}}\big|\sqrt{t}z\big|^{2\im\nu(\zeta)},\\
      \label{u1leq}
      &\zeta\in\mathcal{I},\quad t>0,\quad z\in\Sigma_1^\rho,
      \quad |\sqrt{t}z|\geq G.
    \end{align}
    It's easy to verify that
    \begin{align*}
      |\hat{\phi}(\zeta,z)|\leq C\frac{|z|^3}{\rho},\qquad
      \zeta\in\mathcal{I},\quad z\in\Sigma^\rho,
    \end{align*}
    so together with Lemma \ref{RjSj}, the right-hand of
    (\ref{u1leq}) is of order
    \begin{align}\nonumber
      &O\bigg(\bigg(\frac{L|z|^\alpha e^{\frac{t|z|^2}{6}}}
      {\rho^\alpha}+\frac{Ct|z|^3}{\rho}\bigg)e^{-\frac{t|z|^2}{4}}
      (t|z|^2)^{\im\nu(\zeta)}\bigg)\\\nonumber
      &=O\bigg(\bigg(\frac{(t|z|^2)
      ^{\alpha/2+\im \nu(\zeta)}}{\tau^{\alpha/2}}+
      \frac{(t|z|^2)^{3/2+\im \nu(\zeta)}}{\tau^{1/2}}
      \bigg)e^{-\frac{t|z|^2}{12}}\bigg)\\
      &=O\bigg(\bigg(\frac{1}{\tau^{\alpha/2}}+\frac{1}{\tau^{1/2}}
      \bigg)e^{-\frac{t|z|^2}{24}}\bigg),\qquad\tau\to\infty,
      \quad \zeta\in\mathcal{I}, \quad z\in\Sigma_1^\rho, \quad |\sqrt{t}z|\geq G.
    \end{align}
    uniformly with respect to $(\zeta,z)$ in the given ranges.
    This proves (\ref{u1estimate1}) for $z\in\Sigma_1^\rho$; the
    cases of $z\in\Sigma_j^\rho,\,j=2,3,4,$ are similar.
    For the case of $z\in\Sigma_5^\rho$, we have
    \begin{align}\label{u1onSigma5}
      u_1(\zeta,t,z)=
      \begin{pmatrix}
        -S_1(\zeta,t,z)S_2(\zeta,t,z) & -S_2(\zeta,t,z)e^{-t(\phi(\zeta,z)-\phi(\zeta,0))}
        (\sqrt{t}z)^{2i\nu(\zeta)}\\
        S_1(\zeta,t,z)e^{t(\phi(\zeta,z)-\phi(\zeta,0))}(\sqrt{t}z)^{-2i\nu(\zeta)}
        & 0
      \end{pmatrix}.
    \end{align}
    By Lemma \ref{RjSj}, the (11) entry satisfies
    \begin{align}\nonumber
      \big|(u_1(\zeta,t,z))_{11}\big|=&|S_1(\zeta,t,z)S_2(\zeta,t,z)|
      \leq C\Big|\frac{z}{\rho}\Big|^2t^{-3}\\\label{u111onSigma5}
      &\leq C\frac{|z|^2}{\tau}t^{-2} \leq C\tau^{-3},\qquad
      \zeta\in\mathcal{I},\quad t>0, \quad z\in\Sigma_5^\rho,
    \end{align}
    and the (12) entry satisfies
    \begin{align}\nonumber
      \big|(u_1(\zeta,t,z))_{12}\big|&=\bigg|S_2(\zeta,t,z)e^{-t(\phi(\zeta,z)-
      \phi(\zeta,0))}(\sqrt{t}z)^{2i\nu(\zeta)}\bigg|\\\nonumber
      &\leq C\frac{|z|^{1-2\im\nu(\zeta)}}{\rho}t^{-\frac{3}{2}-\im\nu(\zeta)}
      \leq C\frac{|z|^{1-2\im\nu(\zeta)}}{\tau^{\frac{1}{2}}}t^{-1-\im\nu(\zeta)}.
    \end{align}
    Because $-\frac{1}{2}<\im\nu(\zeta)<\frac{1}{2}$, $(u_1(\zeta,t,z))_{12}$
    is of order
    \begin{equation}\label{u112onSigma5}
      (u_1(\zeta,t,z))_{12}=O\Big(\tau^{-\frac{3}{2}-\im\nu(\zeta)}\Big),
      \qquad \tau\to\infty, \quad  \zeta\in\mathcal{I}, \quad z\in\Sigma_5^\rho.
    \end{equation}
    Similarly, $(u_1(\zeta,t,z))_{21}$ is of order
    \begin{equation}\label{u121onSigma5}
      (u_1(\zeta,t,z))_{21}=O\Big(\tau^{-\frac{3}{2}+\im\nu(\zeta)}\Big),
      \qquad \tau\to\infty, \quad  \zeta\in\mathcal{I}, \quad z\in\Sigma_5^\rho.
    \end{equation}
    Using (\ref{u111onSigma5}), (\ref{u112onSigma5}) and (\ref{u121onSigma5}),
    we prove (\ref{u1estimate2}) for $z\in\Sigma_5^\rho$; the case of $z\in\Sigma_6^\rho$
    is similar.

    On the other hand, (\ref{u2estimate1}) and (\ref{u2estimate2}),
    the estimates of $u_2(\zeta,t,z)$, can be proved in the same way.
  \end{proof}

  Following Lemma 2.6 in \cite{L}, we use Lemma \ref{hatwlponGammaminusGammaSigma}
  and Lemma \ref{hatwestimateonGammaSigma} to obtain the estimates:
  \begin{align}\label{hatwl2}
    \Vert\hat{w}(\zeta,t,\cdot)\Vert_{L^2(\hat{\Gamma})}=O(\epsilon^{\frac{1}{2}}\tau
    ^{-\frac{\alpha}{2}+|\im\nu(\zeta)|}),\qquad \tau\to\infty,\quad
    \zeta\in\mathcal{I},\\\label{hatwlinfty}
    \Vert\hat{w}(\zeta,t,\cdot)\Vert_{L^\infty(\hat{\Gamma})}=O(\tau
    ^{-\frac{\alpha}{2}+|\im\nu(\zeta)|}),\qquad \tau\to\infty,\quad
    \zeta\in\mathcal{I},\\\label{hatwlp}
    \Vert\hat{w}(\zeta,t,\cdot)\Vert_{L^p(\pm k_0+\Sigma^\epsilon)}=
    O(\epsilon^{\frac{1}{p}}\tau
    ^{-\frac{1}{2p}-\frac{\alpha}{2}+|\im\nu(\zeta)|}),\qquad \tau\to\infty,\quad
    \zeta\in\mathcal{I},
  \end{align}
  where $p\in[1,\infty)$ and the error terms are uniform with respect to
  $\zeta\in\mathcal{I}$. Moreover, if taking acocunt of the first and second columns
  of $\hat{w}(\zeta,t,k)$ respectively in Lemma \ref{hatwestimateonGammaSigma}, we have
  \begin{align}\label{hatwjl2}
    \Vert\hat{w}^{(j)}(\zeta,t,\cdot)\Vert_{L^2(\hat{\Gamma})}=
    O(\epsilon^{\frac{1}{2}}\tau
    ^{-\frac{\alpha}{2}+(-1)^j\im\nu(\zeta)}),\qquad \tau\to\infty,\quad
    \zeta\in\mathcal{I},\quad j=1,2\\\label{hatwjlinfty}
    \Vert\hat{w}^{(j)}(\zeta,t,\cdot)\Vert_{L^\infty(\hat{\Gamma})}=O(\tau
    ^{-\frac{\alpha}{2}+(-1)^j\im\nu(\zeta)}),\qquad \tau\to\infty,\quad
    \zeta\in\mathcal{I},\quad j=1,2\\\label{hatwjlp}
    \Vert\hat{w}^{(j)}(\zeta,t,\cdot)\Vert_{L^p(\pm k_0+\Sigma^\epsilon)}=
    O(\epsilon^{\frac{1}{p}}\tau
    ^{-\frac{1}{2p}-\frac{\alpha}{2}+(-1)^j\im\nu(\zeta)}),\qquad \tau\to\infty,\quad
    \zeta\in\mathcal{I},\quad j=1,2
  \end{align}

  \begin{lemma}\label{rhpasymptotics}
    The RH problem (\ref{hatmrhp}) has a unique solution for all
    sufficiently large $\tau$. And for any 
    $\alpha\in(\lambda,1)$ this solution satisfies
    \begin{align}\nonumber
      \lim_{k\to\infty}(k\hat{m}(\zeta,t,k))_{12}=-\frac{2i\epsilon\re\beta(\zeta,t)}
      {\tau^{\frac{1}{2}-\im\nu(\zeta)}}+O\Big(\epsilon\tau^{-\frac{1+\alpha}{2}
      +|\im\nu(\zeta)|+\im \nu(\zeta)}\Big)\\\label{rhpasymp}
      \tau\to\infty, \quad \zeta\in\mathcal{I},
    \end{align}
    where $\lambda=\max\Big(\frac{1}{2},\sup\limits_{\zeta\in\mathcal{I}}2|\im\nu(\zeta)|\Big)$,
    and the error term is uniform with respect to $\zeta\in\mathcal{I}$ and
    $\beta(\zeta,t)$ is defined by
    \begin{equation}\label{beta}
      \beta(\zeta,t)=\frac{\sqrt{2\pi}e^{i\pi/4}e^{-\pi\nu(\zeta)/2}}{q_1(\zeta)
      \Gamma(-i\nu(\zeta))}e^{-t\phi(\zeta,0)}\tau^{-i\re\nu(\zeta)}.
    \end{equation}
  \end{lemma}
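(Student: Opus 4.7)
The plan is to follow the Beals-Coifman / Deift-Zhou framework applied to the model RH problem (\ref{hatmrhp}), using the $L^p$ estimates already established for $\hat{w}$, and then to extract the main term from the two small circles $|k\mp k_0|=\epsilon$ via the known asymptotic expansion of $m^X$ at infinity.

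\textbf{Step 1 (Solvability).} Writing $\hat{v}=I+\hat{w}$, the solution of (\ref{hatmrhp}) is sought in the form $\hat{m}(\zeta,t,k)=I+\frac{1}{2\pi i}\int_{\hat{\Gamma}}\frac{(\hat{\mu}\hat{w})(\zeta,t,s)}{s-k}\,ds$, where $\hat{\mu}$ satisfies $(I-C_{\hat{w}})\hat{\mu}=I$ on $L^2(\hat{\Gamma})$. The operator norm of $C_{\hat{w}}$ is controlled by $\|\hat{w}\|_{L^\infty(\hat{\Gamma})}=O(\tau^{-\alpha/2+|\im\nu(\zeta)|})$ from (\ref{hatwlinfty}). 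The hypothesis $\alpha\in(\lambda,1)$ with $\lambda\geq 2\sup|\im\nu(\zeta)|$ makes the exponent strictly negative, so for all sufficiently large $\tau$, $(I-C_{\hat{w}})$ is invertible with norm $\leq 2$, proving existence and uniqueness.

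\textbf{Step 2 (Recovery and leading term identification).} From the expansion of $\hat{m}$ at infinity,
\begin{equation*}
\lim_{k\to\infty}(k\hat{m}(\zeta,t,k))_{12}=-\frac{1}{2\pi i}\int_{\hat{\Gamma}}\bigl((\hat{\mu}\hat{w})(\zeta,t,s)\bigr)_{12}\,ds.
\end{equation*}
Split $\hat{\mu}=I+(I-C_{\hat{w}})^{-1}C_{\hat{w}}I$, so the integral decomposes into a principal piece $-\frac{1}{2\pi i}\int_{\hat{\Gamma}}\hat{w}^{(2)}\,ds$ and a correction. The correction is estimated by Cauchy-Schwarz using the column-wise bounds (\ref{hatwjl2})--(\ref{hatwjlp}); the $(1,2)$ entry picks up a factor $\tau^{+\im\nu(\zeta)}$ from $\hat{w}^{(2)}$ and $\tau^{|\im\nu(\zeta)|}$ from the resolvent, producing a remainder of order $\tau^{-\alpha+|\im\nu|+\im\nu}$, which lies below the target error since $\alpha\in[1/2,1)$.

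\textbf{Step 3 (Contour localization).} By Lemma \ref{hatwlponGammaminusGammaSigma} the integral over $\Gamma'$ is $O(\epsilon\tau^{-1})$. The pieces of $\pm k_0+\Sigma_j^\epsilon$ for $j=1,\dots,4$ contribute $O(\tau^{-\alpha/2-1/2+|\im\nu|+\im\nu})$ thanks to the Gaussian factor in (\ref{hatwleq1}), while $j=5,6$ contribute $O(\tau^{-3/2+2|\im\nu|+\im\nu})$ by (\ref{hatwleq2}). Both are absorbed in the stated error. Thus only the two circles $|k\mp k_0|=\epsilon$, where $\hat{w}=m_0-I$, carry the leading contribution.

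\textbf{Step 4 (Evaluation by $m^X$).} On $|k-k_0|=\epsilon$, $m_0=D(\zeta,t)m^X\bigl(\zeta,\tfrac{\sqrt{\tau}}{\epsilon}(k-k_0)\bigr)D(\zeta,t)^{-1}$, and the argument of $m^X$ has modulus $\sqrt{\tau}\to\infty$. Lemma \ref{a1} supplies the expansion $m^X(\zeta,z)=I+\frac{m_1^X(\zeta)}{z}+O(z^{-2})$, in which $(m_1^X(\zeta))_{12}$ is an explicit parabolic-cylinder expression. Residue computation over the circle (oriented as in Figure \ref{hatGamma.pdf}) yields
\begin{equation*}
-\frac{1}{2\pi i}\oint_{|k-k_0|=\epsilon}(m_0-I)_{12}\,dk=-\frac{i\epsilon}{\sqrt{\tau}}e^{-t\phi(\zeta,0)}\tau^{-i\nu(\zeta)}(m_1^X(\zeta))_{12}+O(\tau^{-1+\im\nu}).
\end{equation*}
The symmetry (\ref{m0-k0}) forces the $|k+k_0|=\epsilon$ contribution to be the complex conjugate of the $k_0$ contribution, so their sum equals $-\frac{2i\epsilon\re\beta(\zeta,t)}{\tau^{1/2-\im\nu(\zeta)}}$ once $\beta(\zeta,t)$ is identified with the explicit combination in (\ref{beta}) coming from the parabolic-cylinder normalization.

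\textbf{Main obstacle.} The chief difficulty, which distinguishes this analysis from the classical mKdV setting of \cite{DZ,L}, is that $\im\nu(\zeta)\neq 0$ causes every $L^p$ bound on $\hat{w}$ to acquire a factor $\tau^{|\im\nu|}$, and the two columns behave asymmetrically, as recorded in (\ref{hatwjl2})--(\ref{hatwjlp}). The whole argument therefore has to be rebalanced: the $\tau$-normalization in $D(\zeta,t)$ (cf.\ Remark \ref{Ddiff}), the $\tau^{-i\re\nu}$ factor inside (\ref{beta}), and the assumption $\alpha>\lambda=\max(1/2,\,2\sup|\im\nu|)$ all conspire to keep the correction terms strictly below the leading order $\tau^{-1/2+\im\nu}$. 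Tracking these exponents consistently across Steps 2--4 is the delicate part of the proof.
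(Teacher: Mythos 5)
Your overall strategy coincides with the paper's: small-norm solvability via $\Vert\hat{w}\Vert_{L^\infty}$, the reconstruction formula $\lim_{k\to\infty}k(\hat m-I)=-\frac{1}{2\pi i}\int_{\hat\Gamma}\hat\mu\hat w\,dk$, localization to the two circles, evaluation through the large-$z$ expansion of $m^X$, and the Schwarz symmetry to combine the contributions at $\pm k_0$. However, Step 2 contains a genuine gap. You bound the correction $\int_{\hat\Gamma}(\hat\mu-I)\hat w^{(2)}\,dk$ by a single Cauchy--Schwarz over all of $\hat\Gamma$, obtaining $O(\tau^{-\alpha+|\im\nu|+\im\nu})$, and assert that this lies below the target error because $\alpha\in[1/2,1)$. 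It does not: the target exponent is $-\frac{1+\alpha}{2}+|\im\nu|+\im\nu$, and $-\alpha=-\frac{1+\alpha}{2}+\frac{1-\alpha}{2}>-\frac{1+\alpha}{2}$ for every $\alpha<1$, so your bound is strictly weaker than the claimed remainder. Worse, since $\lambda=\max\bigl(\frac12,2\sup|\im\nu|\bigr)<\frac12+|\im\nu|$ whenever $0<|\im\nu|<\frac12$, there are admissible $\alpha\in(\lambda,\frac12+|\im\nu|)$ for which $\tau^{-\alpha+|\im\nu|+\im\nu}$ even dominates the leading term $\tau^{-\frac12+\im\nu}$; as written, the asymptotic formula would not follow.

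The repair, which is what the paper's proof does, is to split the correction integral before applying Cauchy--Schwarz. On the circles $|k\mp k_0|=\epsilon$ the relevant factor is $(m_0^{\mp1}-I)^{(2)}$, whose $L^2$ norm over a circle of radius $\epsilon$ is $O(\epsilon^{1/2}\tau^{-\frac12+\im\nu})$ by (\ref{m0-1}); paired with $\Vert\hat\mu-I\Vert_{L^2}=O(\epsilon^{1/2}\tau^{-\frac{\alpha}{2}+|\im\nu|})$ this gives exactly $O(\epsilon\tau^{-\frac{1+\alpha}{2}+|\im\nu|+\im\nu})$. On $\Gamma$ one must use the localized bound (\ref{hatwjlp}) with $p=2$, whose extra factor $\tau^{-\frac14}$ (absent from the global bound (\ref{hatwjl2}) that your exponent reflects) yields $O(\epsilon\tau^{-\frac14-\alpha+|\im\nu|+\im\nu})$, and this is absorbed in the target error precisely because $\alpha\ge\frac12$. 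Two smaller bookkeeping points: with the orientation of Figure \ref{hatGamma.pdf} the circle contribution in (\ref{limktoinftyk(m-i)}) involves $m_0^{-1}-I$ rather than $m_0-I$, and your Step 4 residue combined with $(m_1^X)_{12}=-i\beta^X$ from (\ref{mXasymp}) produces $-2\epsilon\re\beta\,\tau^{-\frac12+\im\nu}$ rather than the stated $-2i\epsilon\re\beta\,\tau^{-\frac12+\im\nu}$, so the factors of $i$ need rechecking.
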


  \begin{proof}
    We define the integral operator $\hat{\mathcal{C}}_{\hat{w}}:L^2(\hat{\Gamma})+
    L^{\infty}(\hat{\Gamma})\to L^2(\hat{\Gamma})$ by $\hat{\mathcal{C}}_{\hat{w}}f=
    \hat{\mathcal{C}}_-(f\hat{w})$, where $\hat{\mathcal{C}}_-(f\hat{w})$ is the
    boundary value of $\hat{\mathcal{C}}(f\hat{w})$ from the right side of
    $\hat{\Gamma}$, and $\hat{\mathcal{C}}$ is the Cauchy operator associated with
    $\hat{\Gamma}$:
    \begin{equation*}
      (\hat{\mathcal{C}}f)(z)=\frac{1}{2\pi i}\int_{\hat{\Gamma}}\frac{f(s)}{s-z}ds,
      \qquad z\in\C\setminus\hat{\Gamma}.
    \end{equation*}
    By (\ref{hatwlinfty}),
    \begin{equation}\label{cwbl2}
      \Vert \hat{\mathcal{C}}_{\hat{w}} \Vert_{\mathcal{B}(L^2(\hat{\Gamma}))}\leq
      C\Vert\hat{w}\Vert_{L^\infty(\hat{\Gamma})}=O(\tau^{-\frac{\alpha}{2}+
      |\im\nu(\zeta)|}), \qquad \tau\to\infty,\quad \zeta\in\mathcal{I}.
    \end{equation}
    Since $\alpha\in(\lambda,1)$,
    $\Vert \hat{\mathcal{C}}_{\hat{w}} \Vert_{\mathcal{B}(L^2(\hat{\Gamma}))}$
    decays to 0 as $\tau\to\infty$. Thus, there exists a $T>0$ such that
    $I-\hat{\mathcal{C}}_{\hat{w}(\zeta,t,\cdot)}\in\mathcal{B}(L^2(\hat{\Gamma}))$
    is invertible for all $(\zeta,t)\in(0,\infty)$ with $\tau>T$.

    Moreover, by (\ref{hatwl2}) we have
    \begin{equation}\label{mu-i}
      \Vert \hat{\mu}-I \Vert_{L^2(\hat{\Gamma})}=O(\epsilon^{\frac{1}{2}}
      \tau^{-\frac{\alpha}{2}+|\im\nu(\zeta)|}), \qquad \tau\to\infty,
      \quad \zeta\in\mathcal{I}.
    \end{equation}
    where $\hat{\mu}-I=(I-\hat{\mathcal{C}}_{\hat{w}})^{-1}\hat{\mathcal{C}}_{\hat{w}}I
    \in L^2(\hat{\Gamma})$ is the solution of the integral equation
    \begin{equation*}
      (I-\hat{\mathcal{C}}_{\hat{w}})(\mu-I)=\hat{\mathcal{C}}_{\hat{w}}I.
    \end{equation*}

    Consequently, by Lemma 2.9 in \cite{L},
    there exists a unique solution $\hat{m}$ of the RH problem (\ref{hatmrhp})
    whenever $\tau>T$, and in accordance with the local case, we can represent $\hat{m}$
    as
    \begin{equation}\label{hatmsolution}
      \hat{m}(\zeta,t,k)=I+\hat{\mathcal{C}}(\hat{\mu}\hat{w})=I+\frac{1}{2\pi i}
      \int_{\hat{\Gamma}}\hat{\mu}(\zeta,t,s)\hat{w}(\zeta,t,s)\frac{ds}{s-k}.
    \end{equation}
    Finally, Lemma 2.10 in \cite{L} and symmetry (\ref{hatwsymm}) imply that
    \begin{align}\nonumber
      &\lim_{k\to\infty}k(\hat{m}(\zeta,t,k)-I)=-\frac{1}{2\pi i}\int_{\hat{\Gamma}}
      \hat{\mu}(\zeta,t,k)\hat{w}(\zeta,t,k)dk\\\nonumber
      &=-\frac{1}{2\pi i}\bigg(\int_{|k-k_0|=\epsilon}+\int_{|k+k_0|=\epsilon}\bigg)
      \hat{\mu}(\zeta,t,k)\hat{w}(\zeta,t,k)dk-\frac{1}{2\pi i}\int_\Gamma
      \hat{\mu}(\zeta,t,k)\hat{w}(\zeta,t,k)dk\\\label{limktoinftyk(m-i)}
      &=-\frac{1}{\pi i}\re \bigg(\int_{|k-k_0|=\epsilon}\hat{\mu}(\zeta,t,k)
      (m_0(\zeta,t,k)^{-1}-I)dk\bigg)-\frac{1}{2\pi i}\int_\Gamma\hat{\mu}(\zeta,t,k)
      \hat{w}(\zeta,t,k)dk.
    \end{align}
    By Lemma \ref{a1}, we have
    \begin{align}\nonumber
      \bigg(m_0(\zeta,t,k)^{-1}\bigg)^{(2)}&=\bigg(D(\zeta,t)m^X\Big(\zeta,
      \frac{\sqrt{\tau}}{\epsilon}(k-k_0)\Big)^{-1}D(\zeta,t)^{-1}\bigg)^{(2)}\\
      \label{m0-1}
      &=
      \begin{pmatrix}
        0\\1
      \end{pmatrix}
      +\frac{B^{(2)}(\zeta,t)}{\sqrt\tau(k-k_0)}+O(\tau^{-1+\im\nu(\zeta)}),\qquad
      \tau\to\infty,\quad \zeta\in\mathcal{I},\quad |k-k_0|=\epsilon,
    \end{align}
    where $B(\zeta,t)$ is defined by
    \begin{equation*}
      B(\zeta,t)=-i\epsilon
      \begin{pmatrix}
        0 & -\beta^X(\zeta)e^{-t\phi(\zeta,0)}\tau^{-i\nu(\zeta)}\\
        \gamma^X(\zeta)e^{t\phi(\zeta,0)}\tau^{i\nu(\zeta)} & 0
      \end{pmatrix}.
    \end{equation*}
    Using (\ref{mu-i}) and (\ref{m0-1}) we find
    \begin{align}\nonumber
      &\int_{|k-k_0|=\epsilon}\bigg(\hat{\mu}(\zeta,t,k)(m_0(\zeta,t,k)^{-1}-I)\bigg)
      ^{(2)}dk=
      \int_{|k-k_0|=\epsilon}(m_0(\zeta,t,k)^{-1}-I)^{(2)}dk\\\nonumber
      &+\int_{|k-k_0|=\epsilon}(\hat{\mu}(\zeta,t,k)-I)(m_0(\zeta,t,k)^{-1}-I)^{(2)}dk
      \\\nonumber
      &=\frac{B^{(2)}(\zeta,t)}{\sqrt\tau}\int_{|k-k_0|=\epsilon}\frac{dk}{k-k_0}+
      O(\epsilon\tau^{-1+\im\nu(\zeta)})+O(\Vert\hat{\mu}(\zeta,t,\cdot)-I\Vert
      _{L^2(\hat{\Gamma})}\epsilon^{\frac{1}{2}}\tau^{-\frac{1}{2}+\im\nu(\zeta)})\\
      \nonumber
      &=\frac{2\pi iB^{(2)}(\zeta,t)}{\sqrt\tau}+O(\epsilon\tau^{-1+\im\nu(\zeta)})
      +O(\epsilon\tau^{-\frac{1+\alpha}{2}+|\im\nu(\zeta)|+\im\nu(\zeta)})\\
      \label{int_|k-k0|=epsilon}
      &=\frac{2\pi iB^{(2)}(\zeta,t)}{\sqrt\tau}
      +O(\epsilon\tau^{-\frac{1+\alpha}{2}+|\im\nu(\zeta)|+\im\nu(\zeta)}),
      \qquad \tau\to\infty,\quad \zeta\in\mathcal{I},
    \end{align}
    uniformly with respect to $\zeta\in\mathcal{I}$.
    Notice that $B^{(2)}(\zeta,t)$ contains $\tau^{-i\nu(\zeta)}$, the order of the
    leading term of (\ref{int_|k-k0|=epsilon}) is $\tau^{-\frac{1}{2}+\im\nu(\zeta)}$.
    Since $\alpha\in(\lambda,1)$, the error term of (\ref{int_|k-k0|=epsilon})
    does make sense compared to the leading term.

    On the other hand,
    \begin{align*}
      \bigg|\int_{\Gamma}\bigg(\hat{\mu}(\zeta,t,k)\hat{w}(\zeta,t,k)\bigg)^{(2)}dk\bigg|&=
      \bigg|\int_{\Gamma}(\hat{\mu}(\zeta,t,k)-I)\hat{w}^{(2)}(\zeta,t,k)dk+\int_{\Gamma}
      \hat{w}^{(2)}(\zeta,t,k)dk\bigg|\\
      &\leq \Vert\hat{\mu}-I\Vert_{L^2(\Gamma)}\Vert\hat{w}^{(2)}\Vert_{L^2(\Gamma)}
      +\Vert\hat{w}^{(2)}\Vert_{L^1(\Gamma)}.
    \end{align*}
    (\ref{wlpongammaminusgammasigma}) and (\ref{hatwjlp}) implies that
    $\Vert\hat{w}^{(2)}\Vert_{L^1(\Gamma)}=O(\epsilon\tau^{-1}+
    \epsilon\tau^{-\frac{1+\alpha}{2}+\im\nu(\zeta)})$,
    and
    $\Vert\hat{w}^{(2)}\Vert_{L^2(\Gamma)}=O(\epsilon^{\frac{1}{2}}\tau^{-1}+
    \epsilon^{\frac{1}{2}}\tau^{-\frac{1}{4}-\frac{\alpha}{2}+\im\nu(\zeta)})$.
    Since $\Vert\hat{\mu}-I\Vert_{L^2(\Gamma)}=O(\epsilon^{\frac{1}{2}}
    \tau^{-\frac{\alpha}{2}+|\im\nu(\zeta)|})$ by (\ref{mu-i}) and $\alpha\in(\lambda,1)$,
    we find that
    \begin{align}\nonumber
      \bigg|\int_{\Gamma}\bigg(\hat{\mu}(\zeta,t,k)\hat{w}(\zeta,t,k)\bigg)^{(2)}dk\bigg|&=
      O(\epsilon\tau^{-1}+\epsilon\tau^{-\frac{1+\alpha}{2}+\im\nu(\zeta)}+\epsilon
      \tau^{-\frac{1}{4}-\alpha+|\im\nu(\zeta)|+\im\nu(\zeta)})\\
      \label{inthatmuhatwongamma}
      &=O(\epsilon\tau^{-\frac{1+\alpha}{2}+|\im\nu(\zeta)|+\im\nu(\zeta)}),
      \qquad \tau\to\infty, \quad \zeta\in\mathcal{I},
    \end{align}
    uniformly with respect to $\zeta\in\mathcal{I}$. Then (\ref{limktoinftyk(m-i)})
    (\ref{m0-1}), (\ref{int_|k-k0|=epsilon}) and (\ref{inthatmuhatwongamma}) imply
    (\ref{rhpasymp}).
  \end{proof}

  \begin{theorem}\label{nmkdvlongtime}
    Consider the Cauchy problem (\ref{nmkdv_ini}). We assume that the scattering data
    associated the initial data $q_0(x)$ are such that:
    \begin{enumerate}
      \item $a_1(k)$ and $a_2(k)$ have no zeros in $\overline{\C_+}$ and
      $\overline{\C_-}$ respectively;
      \item For $\zeta\in\mathcal{I}$, $\Delta(\zeta)=\int_{-\infty}^{k_0}
      d\arg(1-r_1(s)r_2(s))\in(-\pi,\pi)$, where $r_1(s)=\frac{b(s)}{a_1(s)}$ and
      $r_1(s)=\frac{b(s)}{a_2(s)}$.
    \end{enumerate}
    Then, for any $\alpha\in(\lambda,1)$ and $N>0$, the solution $q(x,t)$ defined by
    (\ref{ist4.35}) satisfies
    \begin{align}\nonumber
      q(x,t)=\frac{4\epsilon\re\beta(\zeta,t)}{\tau^{\frac{1}{2}-\im\nu(\zeta)}}+
      O(\epsilon\tau^{-\frac{1+\alpha}{2}+|\im\nu(\zeta)|+\im\nu(\zeta)}),\\
      \tau\to\infty,\,\, -Nt<x<0,
    \end{align}
    where the error term is uniform with respect to $x$ in the given range.
  \end{theorem}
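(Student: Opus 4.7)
My plan is to reduce the statement to Lemma \ref{rhpasymptotics} by tracking the reconstruction formula (\ref{ist4.35}) back through the chain of invertible transformations $M\mapsto\tilde M\mapsto m\mapsto\hat m$ constructed in Section \ref{rtamrp}. The key observation is that for $|k|$ large all three auxiliary functions collapse to $M\,\delta^{-\sigma_3}$, and $\delta\to 1$ at infinity, so that the $(1,2)$ entry of $M$ can be read off from that of $\hat m$ up to an error that vanishes after multiplication by $k$.

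First I would observe that, by (\ref{hatm}), $\hat m(\zeta,t,k)=m(\zeta,t,k)$ whenever $|k\pm k_0|>\epsilon$; by (\ref{m}), $m(\zeta,t,k)=\tilde M(\zeta,t,k)$ outside the triangular sectors $V_1,V_3,V_4,V_6$; and by definition $\tilde M=M\,\delta^{-\sigma_3}$. Hence, for $|k|$ sufficiently large,
\begin{equation*}
\hat m(\zeta,t,k)=M(\zeta,t,k)\,\delta(\zeta,k)^{-\sigma_3},
\qquad \text{so}\qquad (\hat m)_{12}=M_{12}(\zeta,t,k)\,\delta(\zeta,k).
\end{equation*}
Expanding $\delta(\zeta,k)=1+O(k^{-1})$ at infinity (which follows by differentiating (\ref{delta}) and using that $\ln(1-r_1r_2)$ is bounded on $[-k_0,k_0]$ under the assumption $\Delta(\zeta)\in(-\pi,\pi)$) together with $M_{12}=O(k^{-1})$, I obtain $M_{12}(\delta-1)=O(k^{-2})$ and therefore
\begin{equation*}
\lim_{k\to\infty}k\,(\hat m(\zeta,t,k))_{12}=\lim_{k\to\infty}k\,M_{12}(x,t,k).
\end{equation*}

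Combining this identity with the reconstruction formula (\ref{ist4.35}) and the asymptotic expansion (\ref{rhpasymp}) of Lemma \ref{rhpasymptotics} yields, for every $\alpha\in(\lambda,1)$,
\begin{align*}
q(x,t)&=2i\lim_{k\to\infty}(kM(x,t,k))_{12}=2i\lim_{k\to\infty}(k\hat m(\zeta,t,k))_{12}\\
&=2i\Bigl(-\tfrac{2i\epsilon\,\re\beta(\zeta,t)}{\tau^{1/2-\im\nu(\zeta)}}+O\bigl(\epsilon\,\tau^{-(1+\alpha)/2+|\im\nu(\zeta)|+\im\nu(\zeta)}\bigr)\Bigr)\\
&=\frac{4\epsilon\,\re\beta(\zeta,t)}{\tau^{1/2-\im\nu(\zeta)}}+O\bigl(\epsilon\,\tau^{-(1+\alpha)/2+|\im\nu(\zeta)|+\im\nu(\zeta)}\bigr).
\end{align*}
Since $-Nt<x<0$ corresponds precisely to $\zeta=x/t\in\mathcal{I}$, the uniformity in $\zeta$ provided by Lemma \ref{rhpasymptotics} upgrades to uniformity in $x$ on the claimed range.

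The substantive work has already been done in the preceding lemmas: checking invertibility of $I-\hat{\mathcal C}_{\hat w}$ for large $\tau$, controlling $\hat w$ on the error contour $\Gamma'$ (Lemma \ref{hatwlponGammaminusGammaSigma}) and on the stationary-phase crosses $\Gamma_\Sigma$ (Lemma \ref{hatwestimateonGammaSigma}), and extracting the parabolic-cylinder contribution $B^{(2)}(\zeta,t)$ from the local model. The only conceptual subtlety I would flag is the threshold $\alpha>\lambda\geq 2\sup_{\zeta\in\mathcal{I}}|\im\nu(\zeta)|$: this is precisely what guarantees that the error term in (\ref{rhpasymp}) remains strictly subleading with respect to $\tau^{-1/2+\im\nu(\zeta)}$, so no further refinement is required when passing from $\hat m$ back to $M$. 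Beyond this, the argument is pure bookkeeping, and the theorem is essentially a corollary of Lemma \ref{rhpasymptotics}.
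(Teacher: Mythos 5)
Your proposal is correct and follows exactly the route the paper takes: the theorem is read off as a corollary of Lemma \ref{rhpasymptotics} via the reconstruction formula (\ref{ist4.35}) and the identification $\lim_{k\to\infty}k M_{12}=\lim_{k\to\infty}k\hat m_{12}$. The only difference is that you spell out the justification of that identification (the chain $M\mapsto\tilde M\mapsto m\mapsto\hat m$ being trivial near $k=\infty$ and $\delta\to1$), which the paper leaves implicit.
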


  \begin{proof}
    Lemma \ref{rhpasymptotics} implies that (\ref{ist4.35}) exists for all sufficiently
    large $\tau$, and
    \begin{align}\nonumber
      q(x,t)&=2i\lim_{k\to\infty}(kM(x,t,k))_{12}=2i\lim_{k\to\infty}
      (k\hat{m}(x,t,k))_{12}\\
      &=\frac{4\epsilon\re\beta(\zeta,t)}{\tau^{\frac{1}{2}-\im\nu(\zeta)}}+
      O(\epsilon\tau^{-\frac{1+\alpha}{2}+|\im\nu(\zeta)|+\im\nu(\zeta)}).
    \end{align}
  \end{proof}

  \begin{remark}
    In contrast with the local mKdV equation, the decay rate of the leading term depends
    on $\zeta=\frac{x}{t}$ through $\im\nu(\zeta)$. Notice that in the local case,
    $\im\nu(\zeta)=0$ for all $\zeta\in\mathcal{I}$, and Theorem \ref{nmkdvlongtime}
    regresses to the main result of \cite{L}.
  \end{remark}

  \begin{remark}
    In section 4 of \cite{RS}, the conditions (1) and (2) in Theorem \ref{nmkdvlongtime} were
    verified in the case of single box initial data, for which the scattering data can
    be calculated explicitly.
  \end{remark}

  \appendix
  \section{}
  \renewcommand{\theequation}{A.\arabic{equation}}\nequation
  \begin{lemma}\label{a1}
    The RH problem (\ref{mXrhp}) has a unique solution $m^X(\zeta,z)$ for each $\zeta
    \in\mathcal{I}$. This solution satisfies
    \begin{equation}\label{mXasymp}
      m^X(\zeta,z)=I+\frac{i}{z}
      \begin{pmatrix}
        0 & -\beta^X(\zeta)\\
        \gamma^X(\zeta) & 0
      \end{pmatrix}
      +O(\frac{1}{z^2}), \qquad z\to\infty, \zeta\in\mathcal{I},
    \end{equation}
    where the error term is uniform with respect to $\arg z\in[0,2\pi]$ and $\zeta\in
    \mathcal{I}$. The functions $\beta^X(\zeta)$ and $\gamma^X(\zeta)$ are defined by
    \begin{align}\label{betaX}
      \beta^X(\zeta)=\frac{\sqrt{2\pi}e^{i\pi/4}e^{-\pi\nu(\zeta)/2}}{q_1(\zeta)
      \Gamma(-i\nu(\zeta))},\\\label{gammaX}
      \gamma^X(\zeta)=\frac{\sqrt{2\pi}e^{-i\pi/4}e^{-\pi\nu(\zeta)/2}}{q_2(\zeta)
      \Gamma(i\nu(\zeta))}.
    \end{align}
    Moreover, for each closed disk $K\in\C$ centered at the origin,
    \begin{equation}\label{supsupmXzinusigma3}
      \sup_{\zeta\in\mathcal{I}}\sup_{z\in K\setminus X}\Big|m^X(\zeta,z)
      z^{i\nu(\zeta)\sigma_3}\Big|<\infty.
    \end{equation}
  \end{lemma}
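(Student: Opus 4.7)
The plan is to treat this as the standard parabolic-cylinder model RH problem and reduce it to an explicit ODE. The strategy is to strip off the ``outer'' factors $z^{\pm 2i\nu}e^{\mp iz^2/2}$ in the jump matrix $v^X$, turn the jumps of the transformed function into piecewise-constant matrices, and then use the fact that the resulting logarithmic derivative is entire of linear growth to derive a Weber-type equation whose fundamental solutions are the parabolic-cylinder functions $D_a(\cdot)$.

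First, I would set $\Psi(\zeta,z)=m^X(\zeta,z)\,z^{i\nu(\zeta)\sigma_3}\,e^{-iz^2\sigma_3/4}$ sector by sector on $\C\setminus X$, using a fixed branch of $z^{i\nu}$. A direct computation of $e^{iz^2\sigma_3/4}z^{-i\nu\sigma_3}v^X z^{i\nu\sigma_3}e^{-iz^2\sigma_3/4}$ shows that on each ray $X_j$ the powers and exponentials cancel exactly, leaving constant jumps whose only $\zeta$-dependent entries are $q_1(\zeta)$, $q_2(\zeta)$, $-q_1/(1-q_1q_2)$ and $-q_2/(1-q_1q_2)$. Hence $(\partial_z\Psi)\Psi^{-1}$ has no jump across $X$, is bounded near the origin (by the normalization that will produce (\ref{supsupmXzinusigma3})), extends to an entire function of $z$, and because of the $e^{-iz^2\sigma_3/4}$ factor has at most linear growth at infinity. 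It must therefore equal $-\tfrac{iz}{2}\sigma_3+A(\zeta)$ with $A(\zeta)$ an off-diagonal constant matrix (the off-diagonal structure follows by examining the $1/z$ coefficient in $m^X=I+O(z^{-1})$). Consequently each column of $\Psi$ decouples into a scalar Weber equation, so $\Psi$ is built from $D_{-i\nu(\zeta)}(\cdot)$ and $D_{i\nu(\zeta)-1}(\cdot)$.

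Second, I would write the explicit solution sector by sector using $D_a\!\bigl(\pm\sqrt{2}e^{\mp i\pi/4}z\bigr)$ with parameter $a=-i\nu$ or $a=i\nu-1$, and determine the four sets of sector coefficients from two conditions: the dominant asymptotic $D_a(w)\sim w^a e^{-w^2/4}$ must produce $m^X\to I$ in every direction at infinity uniformly in $\arg z$, and the subdominant branch, generated after crossing a Stokes line via the connection formula $D_a(w)=e^{a\pi i}D_a(-w)+\tfrac{\sqrt{2\pi}}{\Gamma(-a)}e^{(a+1)\pi i/2}D_{-a-1}(\mp iw)$, must reproduce the prescribed constant jumps on $X_j$. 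This matching uniquely fixes all coefficients, thereby establishing both existence and uniqueness of $m^X$, and pins down the off-diagonal entries of $A(\zeta)$; tracking the phases $e^{\pm i\pi/4}e^{-\pi\nu(\zeta)/2}$ and the factors $\sqrt{2\pi}/\Gamma(\mp i\nu)$ one arrives at (\ref{betaX})--(\ref{gammaX}), and reading off the $1/z$ coefficient of $m^X=I+A(\zeta)/z+O(z^{-2})$ yields (\ref{mXasymp}). The bound (\ref{supsupmXzinusigma3}) is then immediate, since $\Psi$ is entire and bounded on compacts uniformly in $\zeta\in\mathcal{I}$ because $\nu(\zeta)$ is bounded by (\ref{reinu}), and the only possible singular behavior of $m^X$ at $z=0$ is absorbed precisely by the factor $z^{-i\nu\sigma_3}$.

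The main obstacle I anticipate is the bookkeeping in the matching step: choosing the Weber-function coefficients in each of the four sectors, propagating them consistently around the origin so that all four constant jumps are reproduced simultaneously, and tracking the phase contributions carefully enough to recover the precise combination $e^{\pm i\pi/4}e^{-\pi\nu(\zeta)/2}\sqrt{2\pi}/\Gamma(\mp i\nu(\zeta))$. This is the standard but delicate part of the construction; in practice I would follow the template in \cite{DZ} and \cite{L}, the only real adjustment being that here $\nu(\zeta)$ is genuinely complex and $q_2(\zeta)\neq\overline{q_1(\zeta)}$, so that $\beta^X$ and $\gamma^X$ are no longer complex conjugates and must be computed independently.
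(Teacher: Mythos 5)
Your proposal is correct and is essentially the same argument the paper relies on: the paper's own ``proof'' simply defers to Appendix A of \cite{RS} and Appendix B of \cite{L}, where exactly this reduction to constant jumps via $\Psi=m^X z^{i\nu\sigma_3}e^{-iz^2\sigma_3/4}$, the Liouville/Weber-equation step, and the parabolic-cylinder matching are carried out, together with the same observation that the $z^{-i\nu\sigma_3}$ factor absorbs the singularity at the origin (the paper's only added remark). Your two genuinely new bookkeeping points --- that $\nu(\zeta)$ is complex and $q_2\neq\overline{q_1}$, so $\beta^X$ and $\gamma^X$ must be computed independently --- are precisely the adjustments needed here.
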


  \begin{proof}
    The detailed proofs can be found in Appendix A in \cite{RS} and Appendix B
    in \cite{L}. Notice that $m^X(\zeta,z)$ is singular at the origin, which is
    different from the local case. Multiplying $m^X(\zeta,z)$ by
    $z^{i\nu(\zeta)\sigma_3}$ can remove the singularity.
  \end{proof}

  \bibliographystyle{plain}

  \end{document}